\newtheorem{theorem}{Theorem}[section]
\newtheorem{proposition}[theorem]{Proposition}
\newtheorem{corollary}[theorem]{Corollary}
\newtheorem{definition}[theorem]{Definition}
\newtheorem{observation}[theorem]{Observation}
\newtheorem*{problemstatement*}{Registration Problem}
\DeclarePairedDelimiter \norm {\lVert}{\rVert}
\def \x  {\textbf{x}}
\def \y  {\mathbf{y}}
\def \bX {\mathbf{X}}
\def \bY {\mathbf{Y}}
\def \cE {\mathcal{E}}
\def \cI {\mathcal{I}}
\def \cP {\mathcal{P}}
\def \cQ {\mathcal{Q}}
\def \cR {\mathcal{R}}
\def \cS {\mathcal{S}}
\def \cT {\mathcal{T}}
\def \cU {\mathcal{U}}
\def \R  {\mathbb{R}}
\newcommand{\leqnomode}{\tagsleft@true\let\veqno\@@leqno}
\newcommand{\reqnomode}{\tagsleft@false\let\veqno\@@eqno}
\begin{document}
\title{On Uniquely Registrable Networks}
\author{Aditya~V.~Singh,~\IEEEmembership{Student~Member,~IEEE}
	and~Kunal~N.~Chaudhury,~\IEEEmembership{Senior~Member,~IEEE}
	
\thanks{The authors are with the Department of Electrical Engineering, Indian Institute of Science, Bangalore, India. E-mail: adityavs@iisc.ac.in, kunal@iisc.ac.in. Corresponding author: Aditya~V.~Singh.}}
%

%
%

\IEEEtitleabstractindextext{	
\begin{abstract}
Consider a network with $N$ nodes in $d$-dimensional Euclidean space, and $M$ subsets of these nodes $P_1,\cdots,P_M$. Assume that the nodes in a given $P_i$ are observed in a local coordinate system. The \emph{registration} problem is to compute the coordinates of the $N$ nodes in a global coordinate system, given the information about $P_1,\cdots,P_M$ and the corresponding local coordinates. The network is said to be \emph{uniquely registrable} if the global coordinates can be computed uniquely (modulo Euclidean transforms). We formulate a necessary and sufficient condition for a network to be uniquely registrable in terms of rigidity of the \emph{body graph} of the network. A particularly simple characterization of unique registrability is obtained for planar networks. Further, we show that $k$-vertex-connectivity of the body graph is equivalent to quasi $k$-connectivity of the bipartite \emph{correspondence graph} of the network. Along with results from rigidity theory, this helps us resolve a recent conjecture due to Sanyal et al. (IEEE TSP, 2017) that quasi $3$-connectivity of the correspondence graph is both necessary and sufficient for unique registrability in two dimensions. We present counterexamples demonstrating that while quasi $(d+1)$-connectivity is necessary for unique registrability in any dimension, it fails to be sufficient in three and higher dimensions.
\end{abstract}

\begin{IEEEkeywords}
network topology, registration problem, graph rigidity, connectivity
\end{IEEEkeywords}}

\maketitle

\IEEEraisesectionheading{\section{Introduction}}
\IEEEPARstart{W}{e} consider the problem of registering nodes of a network in a global coordinate system, given the coordinates of overlapping subsets of nodes in different local coordinate systems. Registration problems of this kind arise in situations where we wish to reconstruct an underlying global structure from multiple local sub-structures, such as in sensor network localization, multiview registration, protein structure determination, and manifold learning \cite{sanyal,mihai12,gortler2013,krishnan,sharp2004,mihaimol12,fang2013,zhang2004}. 
For instance, consider an adhoc wireless network consisting of geographically distributed sensor nodes with limited radio range. To make sense of the data collected from the sensors, one usually requires the positions of the individual sensors. The positions can be found simply by attaching a GPS with each sensor, but this is often not feasible due to cost, power, and weight considerations. On the other hand, we can estimate (using time-of-arrival) the distances between sensor that are within the radio range of each other \cite{mao2007}. The problem of estimating sensor locations from the available inter-sensor distances is referred to as sensor network localization (SNL) \cite{mao2007,shang2004}. 
Efficient methods for accurately localizing small-to-moderate sized networks have been proposed over the years \cite{soares15,simonetto14,wang08,biswas06}. However, these methods typically cannot be used to localize large networks. To address this, scalable divide-and-conquer approaches for SNL have been proposed in \cite{arap10,mihai12,knc2015,sanyal}, where the large network is first subdivided into smaller subnetworks which can be efficiently and accurately localized (pictured in Fig. \ref{fig:regscen}(a)). 
Each subnetwork (called patch) is then localized independent of other subnetworks. Thus, the coordinates returned for a patch will in general be an arbitrarily rotated, flipped, and translated version of the ground-truth coordinates (Fig. \ref{fig:regscen}(b)). The network is thus divided into multiple patches, where each patch can be regarded as constituting a local coordinate system which is related to the global coordinate system by an unknown rigid transform. We now want to assign coordinates to all the nodes in a global coordinate system based on these patch-specific local coordinates. 

The registration problem also comes up in multiview registration, where the objective is to reconstruct a $3$D model of an object based on partial overlapping scans of the object (Fig. \ref{fig:regappl}(a),(b)). Here, the scans can be seen as patches, which are to be registered in a global reference frame via rotations and translations. Similar situation arises in protein conformation (Fig. \ref{fig:regappl}(c),(d)), where we are required to determine the $3$D structure of a protein (or other macromolecule) from overlapping fragments \cite{mihaimol12,fang2013}.

In such problems, a question that naturally arises is that of uniqueness: Can we uniquely identify the global topology of the network that is consistent with the information in the various local coordinate systems? Additionally, do we have computationally efficient tests to determine if the network is uniquely registrable? In this paper, we investigate these questions using results from graph rigidity theory.

\begin{figure*}[h]
	\centering
	\includegraphics[width=\linewidth]{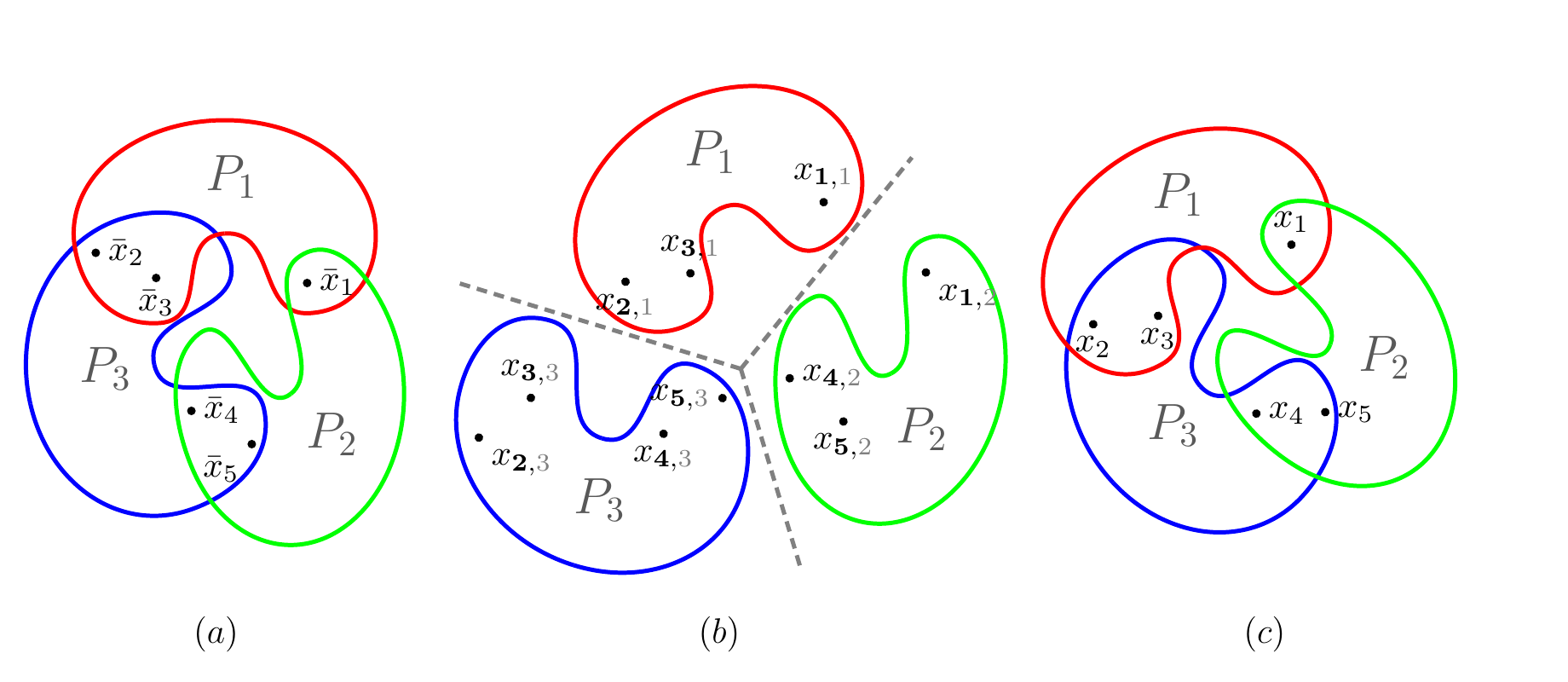}
	\caption
	{
		Typical registration scenario. $(a)$ Ground truth network; $P_1$, $P_2$, $P_3$ are the subnetworks (patches), $(b)$ Three local coordinate systems, with $x_{k,i}$ denoting the coordinate of the $k$-th node in the $i$-th local coordinate system (based on this information, we would like to recover the ground truth network), $(c)$ Reconstructed network. Note that the reconstructed network and the ground truth network are related by a global Euclidean transform, which is the best we can do with the given information. If we want to recover the ground truth network exactly, we need to incorporate at least $d+1$ \emph{anchor nodes} in our network, which are the nodes in the network whose global coordinates are known \emph{a priori}. The anchor nodes (if any) can be considered as forming a patch of their own \cite{sanyal}, and thus our analysis incurs no loss in generality by ignoring their presence.
	}
	\label{fig:regscen}
\end{figure*}

\begin{figure}
	\centering
	\begin{subfigure}[b]{0.49\linewidth}
		\centering
		\includegraphics[width=\linewidth]{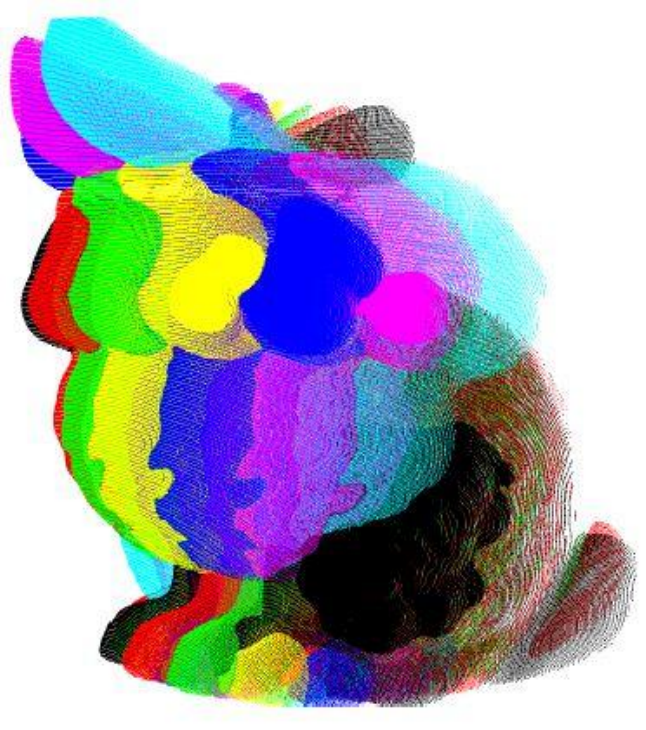}
		\caption
		{
			Partial $3$D scans.		}
		\label{fig:bunny1}
	\end{subfigure}
	\begin{subfigure}[b]{0.49\linewidth}
		\centering
		\includegraphics[width=\linewidth]{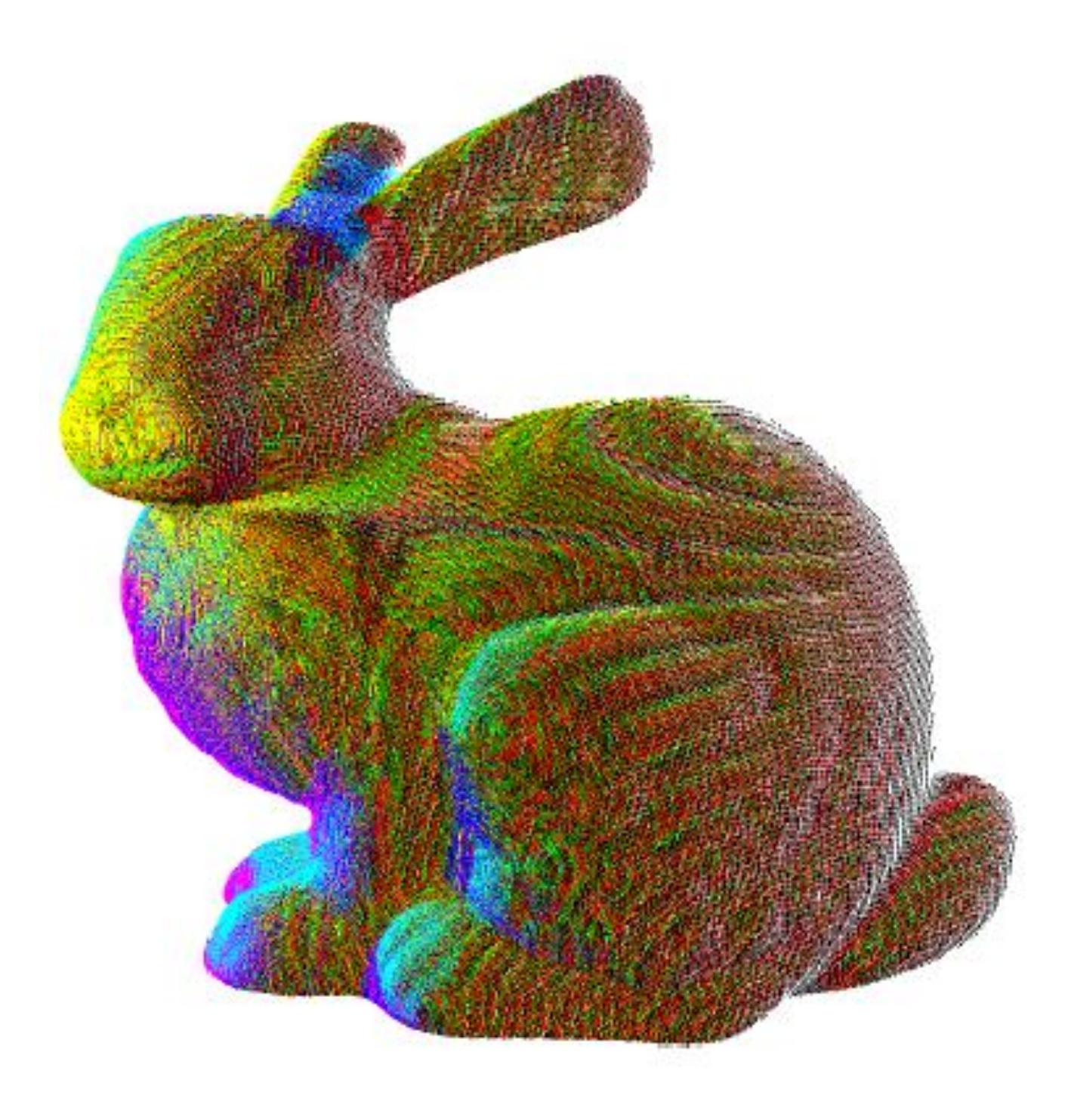}
		\caption
		{
			After registration.
		}
		\label{fig:bunny2}
	\end{subfigure}	
	\begin{subfigure}[b]{0.49\linewidth}
		\centering
		\includegraphics[width=\linewidth]{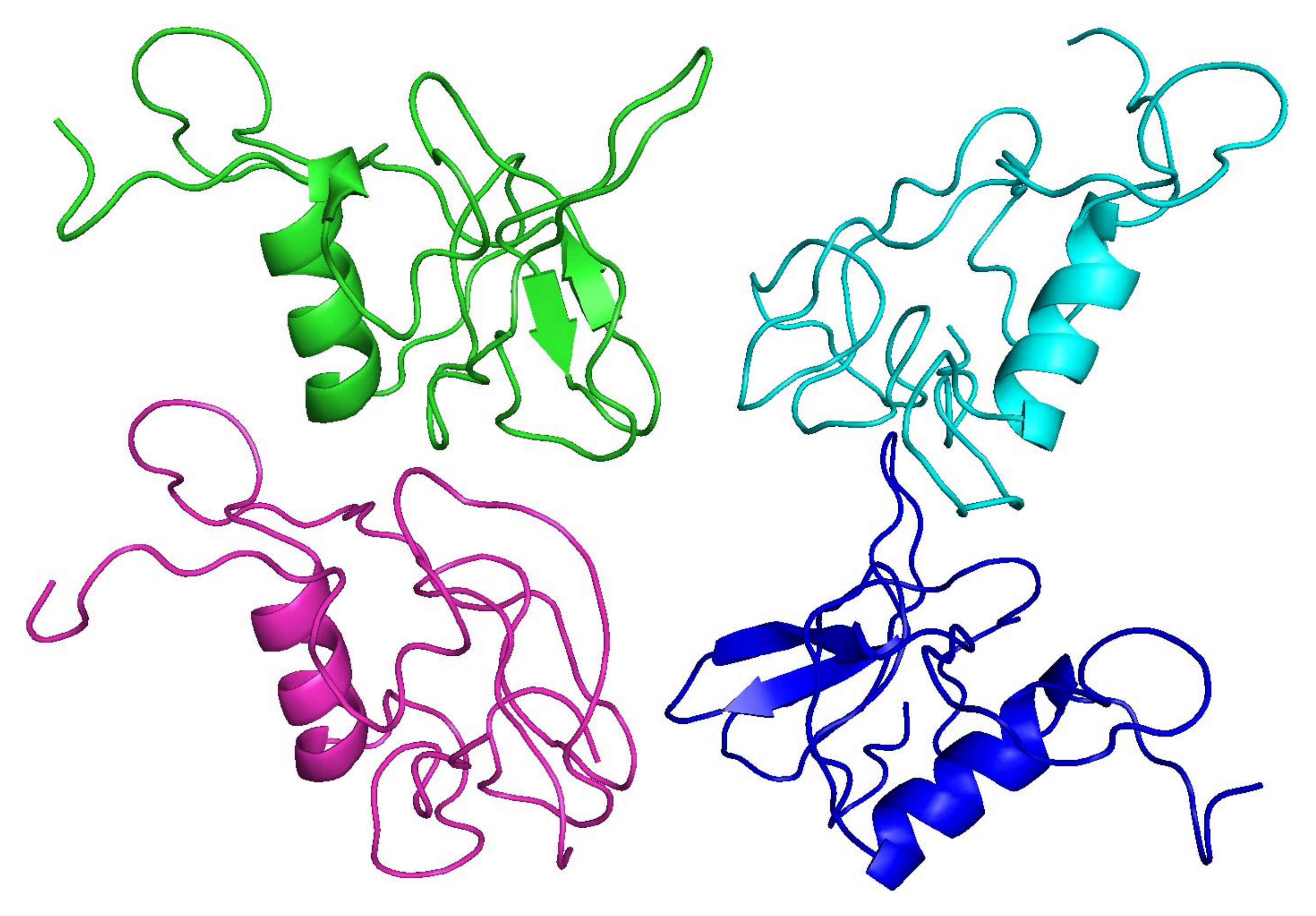}
		\caption
		{
			Fragments of a protein.
		}
		\label{fig:protein1}
	\end{subfigure}
	\begin{subfigure}[b]{0.49\linewidth}
		\centering
		\includegraphics[width=\linewidth]{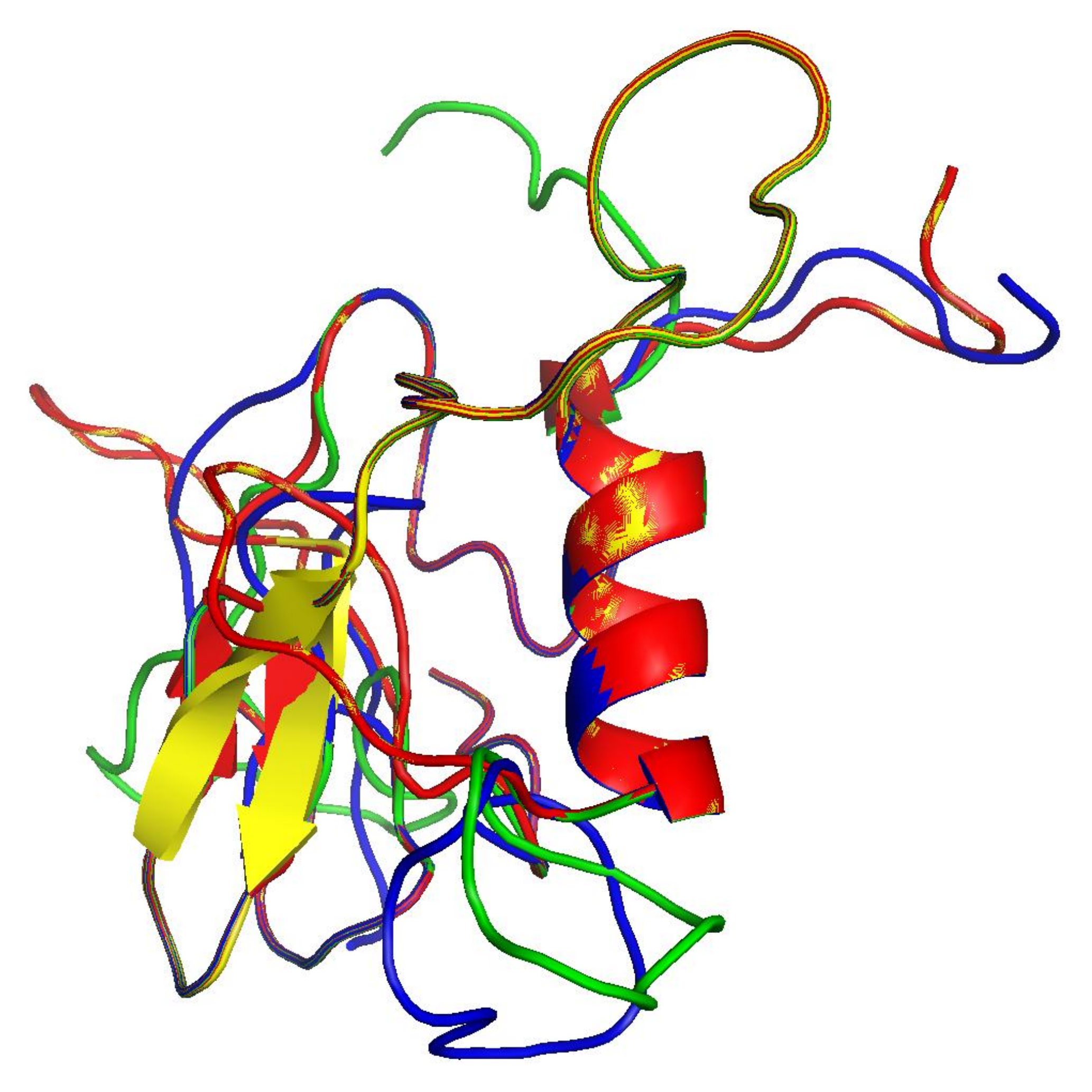}
		\caption
		{
			After registration.
		}
		\label{fig:protein2}
	\end{subfigure}
	\caption
	{ 
		Registration in action. (a),(b): Registration of multiview scans \cite{miraj}.  (c), (d): Registration of protein fragments.
	}
	\label{fig:regappl}
\end{figure}

\subsection{Problem Formulation}
To better facilitate discussion of our contribution, and how it fits in the context of previous work in this area, we formally describe the registration problem, and discuss the notion of uniqueness. Suppose a network consists of $N$ nodes in $\R^d$, which we label using\footnote{we use $[m:n]$ to denote the set of integers $\{m,\ldots,n\}$.} $\cS = [1 : N]$. Let $P_1,\cdots,P_M$ be subsets of $\cS$. We refer to each $P_i$ as a \textit{patch} and let  $\cP = \{P_1,\cdots,P_M\}$ be the collection of patches. A natural way to represent the node-patch correspondence is using the bipartite graph $\Gamma_C = (\cS,\cP,\cE)$, where $(k,i) \in \cE$ if and only if node $k$ belongs to patch $P_i$. We refer to $\Gamma_C$ as the \emph{correspondence graph}.
Let $\bar{x}_1,\ldots,\bar{x}_N \in  \R^d$ be the true coordinates of the $N$ nodes in some global coordinate system. We associate with each patch a local coordinate system: If $(k,i) \in \cE$, let $x_{k,i} \in \R^d$ be the local coordinates of node $k$ in patch $P_i$.  In other words, if $\bar{\cR}_i$ is the Euclidean transform (defined with respect to the global coordinate system) associated with patch $P_i$, then 
\begin{equation} \label{groundtruth}
\bar{x}_k = \bar{\cR}_i(x_{k,i}), \qquad (k,i) \in \cE.
\end{equation} 
We will refer to $\bar{\cR}_i$ as the \emph{patch transform} associated with patch $P_i$. We are now ready to give a precise statement of the registration problem.

\begin{problemstatement*}
	Given a correspondence graph $\Gamma_C = (\cS,\cP,\cE) $ and local coordinates $\{ x_{k,i}: \ (k,i) \in \cE \}$, find  $\bX = (x_k)_{k=1}^N$, and $\bm{\cR} = (\cR_i)_{i=1}^M$, such that for $(k,i) \in \cE$,
	\begin{equation}
	x_k = \cR_i(x_{k,i}).
	\label{reg}
	\tag{REG}
	\end{equation} 
\end{problemstatement*}

Clearly, the true global coordinates $(\bar{x}_k)_{k=1}^N$ and the patch transforms $(\bar{\cR}_i)_{i=1}^M$ satisfy \ref{reg}. But is this solution unique? This is a fundamental question one would be faced with when coming up with algorithmic solutions to the registration problem \cite{knc,sanyal}. Of course, by uniqueness, we mean uniqueness up to  \emph{congruence}, i.e., any two solutions that are related through a Euclidean transform are considered identical. Note that a solution to \ref{reg} has two components: the global coordinates, and the patch transforms. We will define uniqueness for each of these components. Suppose $(\bX, \bm{\cR})$ is a solution to \ref{reg}. By \emph{uniqueness of global coordinates}, we mean that given any other solution $(\bY, \bm{\cT})$ to \ref{reg}, there exists a Euclidean transform $\cQ$ such that $y_k = \cQ(x_k), k \in \cS$. Similarly, by \emph{uniqueness of patch transforms}, we mean that there exists a Euclidean transform $\cU$ such that  $\ \cT_i = \cU \circ \cR_i, i \in [1 : M]$, where $\circ$ denotes the composition of transforms. At this point, we make the following observation.

\begin{observation} \label{obs1}
	It is clear that uniqueness of patch transforms implies uniqueness of global coordinates. That is, given two solutions $(\bX,\bm{\cR})$ and $(\bY,\bm{\cT})$ to \ref{reg}, if there exists a Euclidean transform $\cU$, such that $\cT_i = \cU \circ \cR_i, i \in [1 : M]$, then there exists a Euclidean transform $\cQ$, such that $y_k = \cQ(x_k), k \in \cS$ (in particular, take $\cQ = \cU$). However, uniqueness of global coordinates does not imply uniqueness of patch transforms. That is, given two solutions $(\bX,\bm{\cR})$ and $(\bY,\bm{\cT})$ to \ref{reg}, there may not exist a Euclidean transform $\cU$, such that $\cT_i = \cU \circ \cR_i,  i \in [1 : M]$, even if there exists a Euclidean transform $\cQ$, such that $y_k = \cQ(x_k), k \in \cS$. (This is explained with an example in Fig. \ref{fig:k3patch}.)
\end{observation}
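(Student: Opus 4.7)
The plan is to treat the two halves of the observation independently. For the forward implication, the argument is direct substitution. Assuming $\cT_i = \cU \circ \cR_i$ for every $i$, fix any $k \in \cS$ and pick some $i$ with $(k,i) \in \cE$. Applying \ref{reg} to both solutions,
\begin{equation*}
y_k = \cT_i(x_{k,i}) = \cU(\cR_i(x_{k,i})) = \cU(x_k),
\end{equation*}
so $\cQ \coloneqq \cU$ witnesses uniqueness of global coordinates. This already confirms the ``in particular'' hint in the observation.

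For the reverse direction (a non-implication), I would construct a counterexample by exploiting a local symmetry of a single patch. The driving idea is this: if the local point configuration $\{x_{k,i} : k \in P_i\}$ is invariant under some nontrivial Euclidean transform $\sigma_i$ (i.e.\ $\sigma_i(x_{k,i}) = x_{k,i}$ for every $k \in P_i$), then $\cR_i$ may be replaced by $\cR_i \circ \sigma_i$ without disturbing either \ref{reg} or the global coordinates, yet the patch transform itself genuinely changes. Modifying exactly one patch transform in this way produces a second solution with identical global coordinates, and any candidate $\cU$ relating the new tuple to the old would need to equal the nontrivial isometry $\cR_i \circ \sigma_i \circ \cR_i^{-1}$ on the modified patch while equalling the identity on every unmodified patch, which is impossible.

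A clean instance, matching the label of Fig.~\ref{fig:k3patch}, is three nodes placed at the vertices of a triangle in $\R^2$ covered by three two-point patches, one per edge. A two-point patch in the plane is invariant under reflection across the line through its two points, which supplies the required $\sigma_i$. Flipping a single patch transform then yields an alternative solution whose global coordinates coincide exactly with the original (so they are trivially congruent), while the argument of the previous paragraph rules out any single $\cU$ that carries $(\cR_1,\cR_2,\cR_3)$ to the modified triple.

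The only real obstacle is in the second half, and it is essentially bookkeeping: one has to confirm simultaneously that both candidate tuples solve \ref{reg}, that their global coordinates are congruent, and that no global $\cU$ can relate the patch transforms. Once the symmetry-exploitation idea is in place, each of these is an immediate verification rather than a computation.
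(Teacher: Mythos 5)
Your proposal is correct and follows essentially the same route as the paper: the forward direction is the same direct substitution showing $\cQ=\cU$ works, and your counterexample (a triangle covered by three two-node edge-patches, with one patch transform composed with the reflection fixing its two local points) is precisely the construction of Fig.~\ref{fig:k3patch}, where the transform $\mathcal{T}$ is exactly such a reflection leaving the two nodes of $P_2$ fixed. The only addition on your side is the explicit argument that no single $\cU$ can satisfy $\cT_i=\cU\circ\cR_i$ for all $i$ (forcing $\cU$ to be both the identity and a nontrivial isometry), which the paper leaves implicit but which is a correct and worthwhile verification.
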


\begin{figure}[h]
	\centering
	\includegraphics[width=0.7\linewidth]{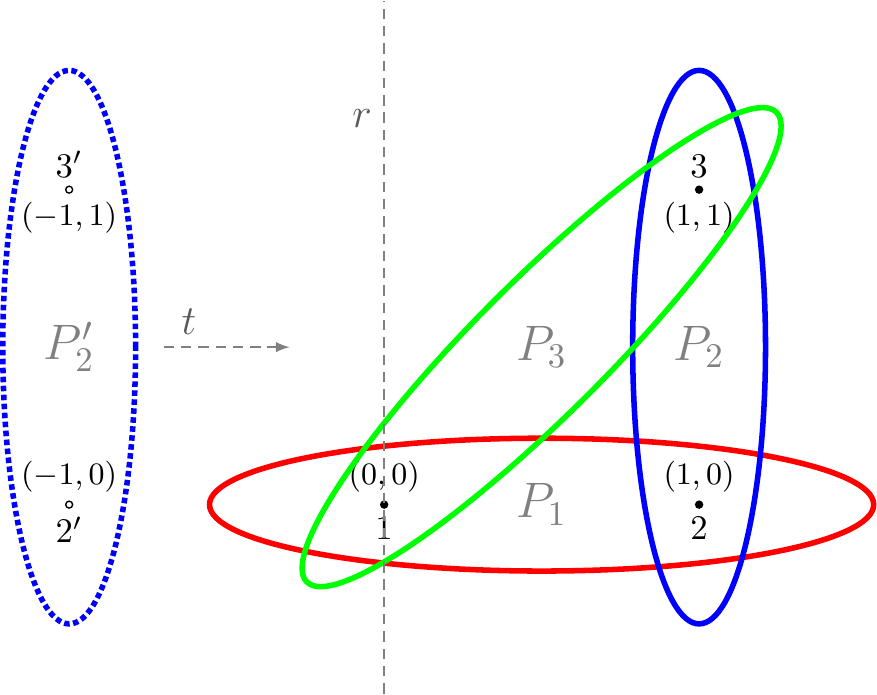}
	\caption
	{
		Consider the nodes $\cS = \{1,2,3\}$, and the patches $\cP = \{P_1,P_2,P_3\}$, where $P_1=\{1,2\}, P_2=\{2,3\}, P_3=\{1,3\}$. The true global coordinates are $\bar{\bX} = ((0,0),(1,0),(1,1))$, and the true patch transforms are $\bar{\bm{\cR}} = (\cI_d,\cI_d,\cI_d)$, where $\cI_d$ is the identity transform (i.e., each patch coordinate system is same as the global coordinate system). Consider the Euclidean transform $\mathcal{T}$, which is a reflection along the dotted line marked $r$, followed by a translation of 2 units along the dotted ray marked $t$. Let $\bm{\cR} = (\cI_d,\mathcal{T},\cI_d)$. Notice that even though both $(\bar{\bX},\bar{\bm{\cR}})$ and $(\bar{\bX},\bm{\cR})$ are solutions to \ref{reg}, $\bm{\cR}$ is not congruent to $\bar{\bm{\cR}}$.
	}
	\label{fig:k3patch}
\end{figure}

Notice that each patch has just two nodes in the example in Fig. \ref{fig:k3patch}. However, we know that a Euclidean transform in $\R^d$ is completely specified by its action on a set of $d+1$ non-degenerate nodes\footnote{A set of nodes in $\R^d$ is said to be \emph{non-degenerate} if their affine span is $\R^d$.}. Equivalently, if  $d+1$ or more non-degenerate nodes are left fixed by a Euclidean transform, then the transform must be identity. This leads to the following proposition.

\begin{proposition} \label{prop:nondegenunique}
	If every patch contains at least $d+1$ non-degenerate nodes, then uniqueness of global coordinates is equivalent to uniqueness of patch transforms.
\end{proposition}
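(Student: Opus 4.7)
The plan is to prove the nontrivial direction of the equivalence: that under the stated non-degeneracy hypothesis, uniqueness of global coordinates implies uniqueness of patch transforms. The reverse implication is already granted by Observation \ref{obs1}.

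First I would fix two arbitrary solutions $(\bX,\bm{\cR})$ and $(\bY,\bm{\cT})$ to \ref{reg}, and invoke the uniqueness of global coordinates to produce a Euclidean transform $\cQ$ with $y_k = \cQ(x_k)$ for every $k \in \cS$. The candidate $\cU$ needed to witness uniqueness of patch transforms will simply be $\cQ$ itself. The task then reduces to showing $\cT_i = \cQ \circ \cR_i$ for each $i \in [1:M]$.

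Fix a patch $P_i$. For every $(k,i) \in \cE$, applying \ref{reg} to both solutions gives $\cT_i(x_{k,i}) = y_k$ and $\cR_i(x_{k,i}) = x_k$, hence
\begin{equation*}
\cT_i(x_{k,i}) \;=\; y_k \;=\; \cQ(x_k) \;=\; (\cQ \circ \cR_i)(x_{k,i}).
\end{equation*}
Thus $\cT_i$ and $\cQ\circ \cR_i$ agree on the set of local coordinates $\{x_{k,i} : k \in P_i\}$. By the hypothesis, the true global coordinates $\{\bar{x}_k : k \in P_i\}$ form a non-degenerate set of at least $d+1$ points, and since the local coordinates are related to them by the Euclidean transform $\bar{\cR}_i^{-1}$ (see \eqref{groundtruth}) which is an affine bijection preserving affine spans, the set $\{x_{k,i} : k \in P_i\}$ is also non-degenerate with $|P_i| \geq d+1$.

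To finish, I would invoke the standard fact quoted just before the proposition: a Euclidean transform in $\R^d$ is completely determined by its action on any set of $d+1$ non-degenerate points. Consequently $\cT_i = \cQ \circ \cR_i$. Since $i$ was arbitrary, setting $\cU = \cQ$ yields the uniqueness of patch transforms. There is no real obstacle here; the only point that needs care is observing that non-degeneracy transfers from the ground-truth global coordinates to the local coordinates inside each patch via the Euclidean patch transform, which is what allows the standard rigidity fact about Euclidean maps to be applied.
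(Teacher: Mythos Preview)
Your proof is correct and follows essentially the same approach as the paper: fix two solutions, use uniqueness of global coordinates to obtain a single Euclidean transform $\cQ$, and then show $\cT_i$ and $\cQ\circ\cR_i$ agree on the local coordinates $\{x_{k,i}:k\in P_i\}$, invoking the fact that a Euclidean transform is determined by its action on $d+1$ non-degenerate points. If anything, your version is slightly more explicit than the paper's in spelling out why the local coordinates inherit non-degeneracy from the ground-truth global coordinates via $\bar{\cR}_i^{-1}$.
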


\begin{proof}
	In Observation \ref{obs1}, we saw that uniqueness of patch transforms implies uniqueness of global coordinates. Thus, we need only prove the converse: that uniqueness of global coordinates implies uniqueness of patch transforms.
	Suppose we have two solutions $(\bX,\bm{\cR})$ and $(\bY,\bm{\cT})$. 
	Following the uniqueness of global coordinates, there exists a Euclidean transform $\cQ$, such that $y_k = \cQ(x_k), k \in \cS$. 
	Fix some $i \in [1 : M]$. 
	Since $(\bY,\bm{\cT})$ is a solution to \ref{reg}, we have $y_k = \cT_i(x_{k,i}), k \in P_i$. 
	Thus, $\cQ(x_k) = \cT_i(x_{k,i})$, or $x_k = (\cQ^{-1}\circ \cT_i)(x_{k,i}), k \in P_i$. 
	On the other hand, since $(\bX,\bm{\cR})$ is also a solution to \ref{reg}, we have $x_k = \cR_i(x_{k,i}), k \in P_i$. 
	Combining the above, we get $(\cQ^{-1}\circ \cT_i)(x_{k,i})=  \cR_i(x_{k,i}), k \in P_i$. Since $|P_i| \geq d+1$, 
	it follows that $\cQ^{-1}\circ\cT_i = \cR_i$, or $\cT_i = \cQ\circ \cR_i$. 
	This holds for every $i \in [1 : M]$, which proves our claim.
\end{proof}

In other words, if every patch contains at least $d+1$ non-degenerate nodes, we need not distinguish between \emph{uniqueness of global coordinates} and \emph{uniqueness of patch transforms}, and we can generally talk about \emph{unique registrability} (i.e. uniqueness of solution to \ref{reg}) without any ambiguity. Intuitively, it is clear that for \ref{reg} to have a unique solution, there must be sufficient overlap among patches. In particular, $\Gamma_C$ must be connected. In Section \ref{sec:back2prob}, we will see that the notion of uniqueness of a solution to \ref{reg} is essentially combinatorial in nature for \emph{almost every} instance of the problem.

\subsection{Related Work}
The correspondence graph $\Gamma_C = (\cS,\cP,\cE)$ encodes the pattern of overlap among patches, which makes it desirable to relate the problem of unique registrability to the properties of $\Gamma_C$. In \cite{knc}, the authors propose a lateration criterion which guarantees unique registrability. We recall that $\Gamma_C$ is said to be \emph{laterated} if there exists a reordering of the patch indices such that $P_1$ contains at least $d+1$ non-degenerate nodes, and $P_i$ and $P_1 \cup P_2 \cup \cdots \cup P_{i-1}$ have at least $d+1$ non-degenerate nodes in common for $i \geq 2$. This criterion, however, has two major shortcomings. First, an efficient test for lateration is not known. Second, lateration is a rather strong condition. For instance, see Fig. \ref{fig:bodygraph}, where $\Gamma_C$ is not laterated, but, as we will see later, the network is uniquely registrable. More recently, the notion of \emph{quasi connectedness} of $\Gamma_C$ was introduced in \cite{sanyal}, which was shown to be necessary for unique registrability, and conjectured to be sufficient.

In a related work \cite{aspnes2006}, rigidity theory is used to deal with unique localizability of nodes in a general sensor network localization problem, where, given inter-node distances of a subset of node-pairs, a graph is constructed with the vertices corresponding to the nodes, and an edge between every node-pair whose inter-node distance is given; it is demonstrated that this graph has to be globally rigid for unique localizability of the sensor network. In the context of divide-and-conquer approach to molecular reconstruction problem, the authors in \cite{mihaimol12} use results from graph rigidity theory to obtain uniquely localizable patches. Tools from rigidity theory have also been used in network design problem \cite{shames2015rigid}, and in quantifying robustness of networks \cite{eren2015combinatorial}.

\subsection{Contribution and Organization}
Our contribution in this paper is two-fold. First, we bring in the notion of \emph{body graph}, introduced in \cite{gortler2013} in the context of affine rigidity, and show that unique registrability of a network is equivalent to global rigidity of the body graph of the network. This, in effect, opens up the possibility of using standard tools and techniques from rigidity theory to formulate conditions for unique registrability. Second, we address the conjecture posed in \cite{sanyal}, namely that  quasi $(d+1)$-connectivity of $\Gamma_C$ is necessary and sufficient for unique registrability in $\R^d$. We show that quasi connectivity of $\Gamma_C$ is equivalent to vertex-connectivity of the body graph, and then use combinatorial characterizations of rigidity in two dimensions to establish the conjecture for $d=2$. This, in particular, gives a simple characterization of unique registrability for planar networks, where we need only check quasi $3$-connectivity of $\Gamma_C$. Next, we give counterexamples to show that the conjecture is false when $d \geq 3$. 

The rest of the paper is organized as follows. In Section \ref{sec:rigidity}, we review relevant definitions and results from rigidity theory. In Section \ref{sec:back2prob}, introduce the notion of body graph and derive our main results on unique registrability. In Section \ref{sec:quasiconn}, we resolve the conjecture posed in \cite{sanyal}.  We summarize our results in Section \ref{sec:concl}. Detailed proofs of some of the technical results from Sections \ref{sec:back2prob} and \ref{sec:quasiconn} are given in Section \ref{sec:supplementary}.

\subsection{Graph Notations}
We will work with undirected graphs in this paper. If $H$ is a subgraph of $G = (V,E)$, which we denote by $H \subseteq G$, then $V(H)$ denotes the set of vertices of $H$, and $E(H)$ denotes the set of edges of $H$. A complete graph (or clique) on $n$ vertices is denoted by $K_n$. Given a graph $G = (V,E)$, and a set $V' \subseteq V$, the subgraph induced by $V'$ is the graph $G' = (V',E')$, where $E' = \{(i,j) \in E : i,j \in V'\}$. The degree of a vertex $v$ of a graph is the number of edges incident on $v$. A path in a graph $G = (V,E)$ is an ordered sequence of distinct vertices $v_1,\cdots,v_n \in V$ such that $(v_i,v_{i+1}) \in E, 1 \leq i \leq n-1$. We denote a path by $v_1-\cdots-v_n$; $v_1$ and $v_n$ are called the end vertices of the path, and every other vertex of the path is  an internal vertex. If $v_1 = a$ and $v_n = b$, we say that the path connects $a$ and $b$, or that $v_1-\cdots-v_n$ is a path between $a$ and $b$. Given subgraphs $A$ and $B$, an $A$-$B$ path is a path $v_1-\cdots-v_n$ where $v_1 \in V(A)$ and $v_n \in V(B)$. Given a subgraph $A$, a path $v_1-\cdots-v_n$ is said to be within $A$, if $v_i \in V(A)$ for every $i \in [1 \colon N]$. Two paths are said to be disjoint if they do not have any vertex in common. Two paths are said to be independent if they do not have any internal vertex in common. A graph is said to be $k$-connected (or, $k$-vertex-connected) if it has more than $k$ vertices and the subgraph obtained after removing fewer than $k$ vertices remains connected; equivalently, by Menger's theorem \cite{diestel2000}, there exists $k$ independent paths between every pair of vertices of the graph.

\section{Rigidity Theory} \label{sec:rigidity} 
Before moving on to our results, we recall some definitions and results from rigidity theory \cite{asimow1978,asimow1979,connelly2005,gortler2010,jackson2005}.

\subsection{Basic Terminology}	
Given a graph $G = (V,E)$, a $d$-dimensional \emph{configuration} is a map $\mathbf{p}:V \rightarrow \R^d$. The pair $(G,\mathbf{p})$ is called a $d$-dimensional \emph{framework}. 	 
Throughout this paper, $\norm{\cdot}$ denotes the Euclidean norm.

\begin{definition}[Equivalent frameworks] \label{def:eqframe}
	Two frameworks $(G,\mathbf{p})$ and $(G,\mathbf{q})$ are said to be equivalent, denoted by $(G,\mathbf{p}) \sim (G,\mathbf{q})$, if $\norm{\mathbf{p}(u) - \mathbf{p}(v)} = \norm{\mathbf{q}(u) - \mathbf{q}(v)}$,  for every $(u,v) \in E$. 
\end{definition}	

\begin{definition}[Congruent frameworks] \label{def:congframe}
	Two frameworks $(G,\mathbf{p})$ and $(G,\mathbf{q})$ are said to be congruent, denoted by $(G,\mathbf{p}) \equiv (G,\mathbf{q})$, if $\norm{\mathbf{p}(u) - \mathbf{p}(v)} = \norm{\mathbf{q}(u) - \mathbf{q}(v)}$ for every $u,v \in V$.	
\end{definition}

In other words, congruent frameworks are related through a Euclidean transform. Clearly, congruence implies equivalence, but the converse is generally not true (see Fig. \ref{fig:equivcong}).

\begin{figure}[h]
	\centering
	\includegraphics[width=1\linewidth]{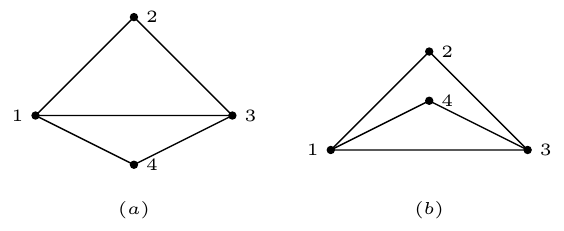}
	\caption
	{Frameworks in $(a)$ and $(b)$ are equivalent because the corresponding edge lengths are equal; however, they are not congruent because the distance between vertices $2$ and $4$ is not equal in the two frameworks. Thus, the framework in $(a)$ is not globally rigid in $\R^2$. On the other hand, it can be shown that the framework is locally rigid in $\R^2$. Observe that there exists no continuous motion in $\R^2$ that takes $(a)$ to $(b)$. Also note that framework $(a)$ is not locally rigid in $\R^3$ since the lower triangle $4$-$1$-$3$ can be rotated in  $3$-dimensional space about the line $1$-$3$ to get framework $(b)$, which is equivalent but non-congruent to framework (a).
	}
	\label{fig:equivcong}
\end{figure}

\begin{definition}[Globally rigidity] 
	A framework $(G,\mathbf{p})$ is said to be globally rigid if any framework equivalent to $(G,\mathbf{p})$ is also congruent to $(G,\mathbf{p})$. 
\end{definition}

This means that given any framework equivalent to a globally rigid framework, there exists a Euclidean transform that relates the two frameworks.

\begin{definition}[Locally rigidity]	
	A framework $(G,\mathbf{p})$ is said to be locally rigid if there exists $\epsilon > 0$ such that any $(G,\mathbf{q}) \sim (G,\mathbf{p})$ satisfying $\norm{\mathbf{p}(v) - \mathbf{q}(v)} \leq \epsilon, v \in V$, is congruent to $(G,\mathbf{p})$. 
\end{definition}

That is, a locally rigid framework cannot be continuously deformed into an equivalent framework (see Fig. \ref{fig:equivcong}).

\subsection{Rigidity and Genericity}	
A fundamental problem in rigidity theory is the following:	
\emph{Given a $d$-dimensional framework $(G,\mathbf{p})$, decide whether it is (locally or globally) rigid in} $\R^d$. In general, the notions of local and global rigidity depend not only on the graph, but also on the configuration (see Fig. \ref{fig:genericity}). This makes testing of rigidity computationally intractable \cite{saxe1980,abbott2008}. A standard way of getting around this is to make an additional assumption of \emph{genericity}. 
A framework (or configuration) is said to be \emph{generic} if there are no algebraic dependencies among the coordinates of the configuration, i.e., the coordinates of the configuration do not satisfy any non-trivial algebraic equation with rational coefficients. For a given graph, the set of non-generic configurations is a measure-zero set in the space of all possible configurations \cite{gluck1975}, and hence \emph{almost every} configuration is generic.

\begin{figure}[h]
	\centering
	\includegraphics[width=1\linewidth]{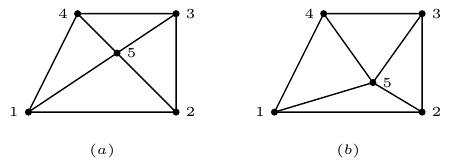}
	\caption
	{Frameworks $(a)$ and $(b)$ with the same underlying graph. Framework $(a)$ is not globally rigid because vertex $4$ can be reflected along the line $1$-$5$-$3$, which results in an equivalent but non-congruent framework. Such an edge-length-preserving reflection is not possible in $(b)$, which is globally rigid.
	}
	\label{fig:genericity}
\end{figure}

We have the following useful proposition which illustrates the utility of the genericity assumption.
\begin{proposition}[\cite{asimow1978,asimow1979,gortler2010}] \label{prop:riggen}
	Local (global) rigidity is a generic property, i.e., either all or none of the generic configurations of a graph form a locally (globally) rigid framework.
\end{proposition}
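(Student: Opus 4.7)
The plan is to reduce both notions of rigidity to rank conditions on matrices whose entries are polynomial in the configuration coordinates, then exploit the fact that the rank of such a matrix attains its maximum on a Zariski-open set that contains every generic configuration.

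For local rigidity, I would first introduce the squared-edge-length map $f_G : \R^{dn} \to \R^{|E|}$ defined by $f_G(\mathbf{p})_{(u,v)} = \norm{\mathbf{p}(u) - \mathbf{p}(v)}^2$, together with its Jacobian $R(\mathbf{p}) = \tfrac{1}{2}\, df_G(\mathbf{p})$, the \emph{rigidity matrix}. Call $(G,\mathbf{p})$ \emph{infinitesimally rigid} when $\mathrm{rank}\, R(\mathbf{p})$ equals $dn - \binom{d+1}{2}$, the largest value possible once the $\binom{d+1}{2}$-dimensional space of trivial infinitesimal motions is quotiented out. The Asimow-Roth theorem \cite{asimow1978,asimow1979} guarantees that at any \emph{regular} point of $f_G$---namely a $\mathbf{p}$ at which $\mathrm{rank}\, R$ attains its global maximum $r_G := \max_{\mathbf{q}} \mathrm{rank}\, R(\mathbf{q})$---local rigidity and infinitesimal rigidity coincide. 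The crucial observation is that each $r \times r$ minor of $R(\mathbf{p})$ is a polynomial with integer coefficients in the entries of $\mathbf{p}$, so the locus $\{\mathbf{p} : \mathrm{rank}\, R(\mathbf{p}) < r_G\}$ is cut out by the simultaneous vanishing of all $r_G \times r_G$ minors, at least one of which is a nonzero polynomial by the definition of $r_G$. This locus is therefore a proper $\mathbb{Q}$-algebraic subvariety of $\R^{dn}$, and by definition every generic configuration avoids it. Consequently $\mathrm{rank}\, R(\mathbf{p}) = r_G$ at every generic $\mathbf{p}$, and local rigidity at a generic configuration reduces, via Asimow-Roth, to checking whether $r_G = dn - \binom{d+1}{2}$---a criterion depending only on $G$.

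For global rigidity I would follow the same template but replace the rigidity matrix by the stress matrix. Recall that an equilibrium stress is an element of the left nullspace of $R(\mathbf{p})$, and each stress $\omega$ determines a symmetric stress matrix $\Omega(\omega,\mathbf{p})$ whose entries are polynomial in $\mathbf{p}$ and linear in $\omega$. Connelly's sufficient condition, together with its converse at generic configurations established by Gortler-Healy-Thurston \cite{gortler2010}, shows that a generic framework is globally rigid if and only if some equilibrium stress produces a stress matrix of rank $n - d - 1$. The maximum rank achievable by $\Omega(\omega,\mathbf{p})$ over equilibrium stresses is again controlled by nonvanishing of explicit polynomial minors (of a suitable combined rigidity/stress matrix) in $\mathbf{p}$, so the same Zariski-openness argument shows it is a constant function of $\mathbf{p}$ across the generic locus. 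Hence global rigidity at a generic configuration depends only on $G$.

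The main obstacle is not the genericity argument itself---which follows cleanly once one accepts that the rank of a matrix of polynomials is generically maximal---but rather the two deep structural theorems it rests on: the Asimow-Roth equivalence of local and infinitesimal rigidity at regular points, and the Gortler-Healy-Thurston stress-matrix characterization of generic global rigidity. Since these are already cited in the statement, my write-up would treat them as black boxes and concentrate on the reduction to polynomial rank, which is precisely the step that upgrades ``rigid at some generic configuration'' to ``rigid at every generic configuration.''
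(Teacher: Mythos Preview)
The paper does not supply its own proof of this proposition; it is stated with citations to \cite{asimow1978,asimow1979,gortler2010} and used as a black box. Your sketch is correct and is precisely the content of those references: the Asimow--Roth rank argument for local rigidity and the Connelly/Gortler--Healy--Thurston stress-matrix characterization for global rigidity, each reduced to the constancy of a polynomial-matrix rank on the generic locus. There is nothing further to compare.
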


That is, the assumption of genericity makes local and global rigidity a property of the graph, independent of its configuration. Thus, we can talk of a \emph{graph} being \emph{generically locally (globally) rigid}, by which we mean that every generic configuration of the graph results in a locally (globally) rigid framework. In particular, this opens up the possibility of coming up with combinatorial characterizations for generic local (global) rigidity solely in terms of the graph properties. Combined with the fact that a randomly chosen configuration of a graph is generic with high probability, testing for generic local and global rigidity can be shown to have complexity RP \cite{gortler2010}, which means that there is a polynomial-time randomized algorithm that never outputs a false positive, and outputs a false negative less than half of the time. This fact illustrates the computational tractability afforded by the genericity assumption. We now review some results from rigidity theory relevant to our discussion.

\subsection{Combinatorial Results on Rigidity}
The notion of \emph{redundant rigidity} plays an important role in the context of global rigidity. A graph is said to be redundantly rigid if the graph is generically locally rigid, and remains generically locally rigid after removal of any edge. 
Hendrickson \cite{hendrickson1992} gave the following combinatorial conditions necessary for a graph to be generically globally rigid in $\R^d$. 

\begin{theorem} [\cite{hendrickson1992}] \label{thm:hendrick}
	If a graph $G$ with at least $d+2$ vertices is generically globally rigid in $\R^d$, then
	\begin{enumerate}[(i)]
		\item $G$ is $(d+1)$-connected,
		\item $G$ is redundantly rigid in $\R^d$.
	\end{enumerate}		 
\end{theorem}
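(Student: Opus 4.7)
My plan is to handle the two conclusions separately, both by contradiction, exploiting specific geometric/combinatorial constructions that produce a framework equivalent but non-congruent to $(G,\mathbf{p})$ for a generic configuration $\mathbf{p}$.

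For part (i), suppose $G$ has a vertex cut $S$ with $|S| \leq d$, so that $G - S$ decomposes into (at least) two components whose vertex sets I call $A$ and $B$. Because $|S| \leq d$, the points $\{\mathbf{p}(v) : v \in S\}$ lie in some affine hyperplane $H \subset \R^d$. I would then define a new configuration $\mathbf{q}$ agreeing with $\mathbf{p}$ on $A \cup S$ and equal to the reflection of $\mathbf{p}$ across $H$ on $B$. Every edge of $G$ is internal to $A \cup S$, internal to $B \cup S$, or has both endpoints in $S$; in each case the edge lengths are preserved (reflection is an isometry, and vertices in $S$ are fixed by the reflection), so $(G,\mathbf{q}) \sim (G,\mathbf{p})$. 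The remaining task is to verify that $(G,\mathbf{q}) \not\equiv (G,\mathbf{p})$ for a generic $\mathbf{p}$: if $\mathbf{q}$ were a Euclidean transform of $\mathbf{p}$, this transform would fix the affinely generic subset $A \cup S$ pointwise (or with at worst a global reflection), forcing all of $\mathbf{p}(B)$ to lie on $H$, which fails generically since $|B| \geq 1$ and $\mathbf{p}$ places $B$ in general position. This contradicts global rigidity.

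For part (ii), suppose some edge $e = (u,v) \in E(G)$ is such that $G - e$ is not generically locally rigid. At a generic configuration $\mathbf{p}$, there then exists a non-trivial analytic flex $\mathbf{p}_t$ of $(G-e, \mathbf{p})$, i.e., a one-parameter family with $\mathbf{p}_0 = \mathbf{p}$ preserving the lengths of every edge of $G - e$ but not arising from a rigid motion. I would argue first that the function $t \mapsto \|\mathbf{p}_t(u) - \mathbf{p}_t(v)\|$ cannot be constant: if it were, then $\mathbf{p}_t$ would also be a non-trivial flex of $G$ itself, contradicting that $G$ is globally rigid (hence a fortiori locally rigid). Therefore this function is a non-constant analytic function of $t$, and by the intermediate value theorem it must revisit the value $\|\mathbf{p}(u)-\mathbf{p}(v)\|$ at some $t^\star$ where the configuration $\mathbf{p}_{t^\star}$ is not congruent to $\mathbf{p}$. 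Concretely, using that the set of congruences is a proper (positive-codimension) subvariety in the space of equivalent configurations and the flex traces a one-dimensional curve not contained in this subvariety for generic $\mathbf{p}$, such a $t^\star$ exists. The resulting $(G,\mathbf{p}_{t^\star})$ is equivalent but not congruent to $(G,\mathbf{p})$, contradicting global rigidity.

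The main obstacle in both parts is the same in spirit: after producing a candidate equivalent framework, I must rule out the possibility that it happens to be congruent to the original. In part (i) this requires exploiting that the reflection hyperplane $H$ contains at most $d$ generic points of $\mathbf{p}$, so it cannot contain (generically) the points of $A$, preventing the reflection from extending to a global isometry of the whole framework. In part (ii) it amounts to verifying that a one-parameter flex of $G - e$ is not confined to the congruence orbit of $\mathbf{p}$; this is where the genericity assumption and Proposition \ref{prop:riggen} are essential, since they let us argue about the dimension of the equivalent-framework variety. Once these non-congruence arguments are in place, both conclusions follow immediately from the definition of global rigidity.
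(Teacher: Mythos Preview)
The paper does not prove this theorem; it is quoted from \cite{hendrickson1992} as background, so there is no in-paper argument to compare against. Your sketch for part (i) is the standard reflection-across-a-separating-hyperplane argument and is essentially correct.

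Part (ii), however, has a genuine gap. From the existence of a non-trivial flex $\mathbf{p}_t$ of $(G-e,\mathbf{p})$ and the non-constancy of $t\mapsto\norm{\mathbf{p}_t(u)-\mathbf{p}_t(v)}$, the intermediate value theorem does \emph{not} force this function to revisit its initial value at some $t^\star\neq 0$: a non-constant analytic function on an interval may be strictly monotone. Your follow-up remark about the congruence orbit being a proper subvariety addresses a different point (that $\mathbf{p}_{t^\star}$, once found, is non-congruent) and does not supply the missing recurrence of the edge length. Hendrickson's actual argument is considerably more delicate: one shows that, for generic $\mathbf{p}$, the fibre of the edge-length map for $G-e$ (modulo congruence) is a smooth compact manifold, and then uses a topological degree / Morse-theoretic count to conclude that the level set $\{\,\norm{\cdot}=\norm{\mathbf{p}(u)-\mathbf{p}(v)}\,\}$ contains a second, non-congruent realization. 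Without some compactness-plus-degree input of this kind, your flex could simply push the length of $e$ monotonically away from its starting value, and the contradiction never closes.
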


Later, Jackson and Jordan \cite{jackson2005} showed that the conditions in Theorem \ref{thm:hendrick} are also sufficient for generic global rigidity in $\R^2$. Thus, we have the following complete combinatorial characterization of generic global rigidity in $\R^2$.

\begin{theorem}[\cite{jackson2005}] \label{thm:jj2d}
	A graph $G$ is generically globally rigid in $\R^2$ if and only if either $G$ is a triangle, or 
	\begin{enumerate}[(i)]
		\item $G$ is $3$-connected, and
		\item $G$ is redundantly rigid in $\R^2$.
	\end{enumerate}
\end{theorem}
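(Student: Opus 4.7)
The plan is to split the argument into necessity and sufficiency. Necessity is short: when $G = K_3$, the three prescribed edge lengths pin down the three vertex positions up to Euclidean motion, so $K_3$ is generically globally rigid in $\R^2$; when $|V(G)| \geq d+2 = 4$, conditions (i) and (ii) follow at once from Hendrickson's theorem (Theorem \ref{thm:hendrick}). So the heart of the argument is sufficiency.

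For sufficiency, I would proceed by induction on $|V(G)|$, using the Henneberg $1$-extension as the inductive operation. Given a graph $H$, an edge $(u,v) \in E(H)$ and a third vertex $w \in V(H)\setminus\{u,v\}$, this operation deletes $(u,v)$ and inserts a new degree-$3$ vertex $z$ adjacent to $u$, $v$, and $w$. The base case is $G = K_4$, whose generic global rigidity in $\R^2$ can be checked directly: any framework with the same six pairwise distances as a generic tetrahedral configuration is congruent to it.

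The induction rests on two key lemmas. First, a preservation lemma due to Connelly: a $1$-extension of a generically globally rigid graph in $\R^2$ remains generically globally rigid; this is proved via the equilibrium stress matrix and is by now standard. Second, a structural (reduction) lemma: every graph $G$ satisfying (i) and (ii) other than $K_4$ admits a degree-$3$ vertex $z$ at which some $1$-reduction — delete $z$ and add one of the three possible missing edges among its neighbors — yields a smaller graph $G'$ that still satisfies (i) and (ii). With these in hand, one reduces $G$ down to $K_4$, applies the base case, and re-extends edge by edge using the preservation lemma.

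I expect the structural lemma to be the main obstacle. The approach is to exploit the fact that redundant rigidity forces $E(G)$ to be covered by circuits of the $2$-dimensional generic rigidity matroid, and then invoke an ear-decomposition theorem for $3$-connected $M$-circuits to locate an admissible degree-$3$ vertex $z$ with neighbors $x, y, w$. The delicate step is verifying that at least one of the three candidate replacement edges $(x,y), (y,w), (x,w)$ can be added so that the resulting $G'$ preserves both redundant rigidity and $3$-connectivity. Ruling out the pathological configurations in which every choice either destroys rigidity or introduces a $2$-vertex cut in $G-z$ requires a careful case analysis of the local structure of $G$ near $z$, and this is where the bulk of the technical work in Jackson and Jordan's argument lies.
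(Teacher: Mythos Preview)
The paper does not prove Theorem~\ref{thm:jj2d}; it is quoted as a known result from \cite{jackson2005} and used as a black box (together with Theorem~\ref{thm:jjMconn}) in the proof of Theorem~\ref{thm:body3suff}. So there is no ``paper's own proof'' to compare your attempt against.

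That said, your sketch is a faithful outline of the actual Jackson--Jord\'an argument: necessity via Hendrickson, sufficiency by induction on $|V(G)|$ with $K_4$ as the base case, Connelly's result that $1$-extension preserves generic global rigidity in $\R^2$ as the inductive step, and a structural lemma guaranteeing an admissible $1$-reduction in any $3$-connected redundantly rigid graph other than $K_4$. You have correctly identified that this last lemma is where the real difficulty lies, and that it is handled via the matroid-circuit (``$M$-connected'') machinery and ear decompositions. One small correction: the reduction lemma does not assert that $G$ itself has a degree-$3$ vertex; rather, one passes through an auxiliary $M$-circuit structure where such a vertex is guaranteed, and then transfers the conclusion back. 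If you intend to flesh this out, that distinction matters for the case analysis.
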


Conditions in Theorem \ref{thm:hendrick} are not sufficient for generic global rigidity in $\R^d$ for $d \geq 3$; we shall see instances of such graphs in Section \ref{sec:quasiconn}. 
We now state a result due to \cite{jackson2005,jordan2009} on redundant rigidity in $\R^2$. We do not define the terms `M-circuit' and `M-connected' that appear in the following theorem (as it will take us far afield) and instead refer the reader to \cite{jordan2009} for the definitions. We only need this theorem to derive Proposition \ref{prop:k4redundant}, which we shall use to prove Theorem \ref{thm:body3suff}.

\begin{theorem} [\cite{jordan2009}] \label{thm:jjMconn}
	The following are true in $\R^2$:
	\begin{enumerate}[(i)]
		\item If a graph $G$ is $3$-connected and each edge of $G$ belongs to an M-circuit, then $G$ is M-connected.
		\item If a graph $G$ is M-connected, then $G$ is redundantly rigid.
	\end{enumerate}
\end{theorem}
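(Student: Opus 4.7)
Both parts concern the $2$-dimensional generic rigidity matroid $\mathcal{M}(G)$ defined on the edge set of $G$. An \emph{M-circuit} is a circuit of $\mathcal{M}(G)$; in $\R^2$ this is exactly a graph on $n$ vertices with $2n-2$ edges satisfying the Laman count $|E(H)|\le 2|V(H)|-3$ on every proper subgraph $H$. A graph is \emph{M-connected} when $\mathcal{M}(G)$ is a connected matroid, equivalently, when every two edges of $G$ lie in a common M-circuit. With these notions in hand, the two parts are of very different difficulty, and I would tackle them separately.

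Part (ii) I would prove by purely matroidal reasoning. M-connectedness implies $\mathcal{M}(G)$ has no coloops, so every edge $e\in E(G)$ lies in some M-circuit $C_e\subseteq E(G)$. I would first show that $G$ is rigid: the circuits $\{C_e\}$ cover $E(G)$, and because any two edges share a common circuit, their Laman-rigid vertex supports can be chained together, forcing the total matroid rank on $E(G)$ to equal $2|V(G)|-3$. Once $G$ is generically rigid, redundancy is immediate: for any edge $e$, the set $C_e\setminus\{e\}$ already spans $e$ in $\mathcal{M}(G)$, so the rank of $E(G)\setminus\{e\}$ is still $2|V(G)|-3$, meaning $G-e$ is rigid.

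Part (i) is the substantial half, and I would attack it by induction on $|V(G)|$ using an extremal-vertex reduction. Three-connectivity guarantees minimum degree at least $3$, while the count that every edge lies in an M-circuit (which forces $|E(G)|\le 2|V(G)|-2$ globally, plus the Laman inequality locally) forces the existence of a vertex $v$ of degree $3$ or $4$. I would then perform an \emph{admissible $1$-reduction} at $v$: delete $v$ and insert one carefully chosen edge between a pair of its neighbors to obtain a smaller graph $G'$ such that (a) $G'$ remains $3$-connected, and (b) every edge of $G'$ still lies in an M-circuit of $G'$. By induction $G'$ is M-connected. Finally I would lift the conclusion back to $G$: the inverse $1$-extension that reinserts $v$ together with its incident edges creates a new M-circuit in $G$ which shares edges with some M-circuit inherited from $G'$, so the newly added edges at $v$ are merged with the M-component coming from $G'$, yielding a single M-component.

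The hard step is clearly the reduction. A naive choice of which edge to add among the neighbors of $v$ can either destroy $3$-connectivity (by creating a $2$-vertex separator) or isolate some edge from every M-circuit, and both possibilities have to be excluded simultaneously. Handling this requires a delicate case analysis on the local neighborhood of a degree-$3$ or degree-$4$ vertex, leaning on admissible-splitting lemmas specific to the $2$-dimensional rigidity matroid; this is where the real content of Jackson and Jord\'an's argument lies, and where I would expect to invest most of the effort.
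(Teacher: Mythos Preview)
The paper does not supply a proof of Theorem~\ref{thm:jjMconn}; it is quoted from \cite{jackson2005,jordan2009}, and the authors explicitly decline even to define M-circuits and M-connectedness, referring the reader to the original source. There is therefore no in-paper argument to compare your proposal against.

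Treated on its own merits, your sketch of part~(ii) is essentially the standard argument and is sound. Your plan for part~(i), however, contains a concrete error: the claim that ``every edge lies in an M-circuit forces $|E(G)|\le 2|V(G)|-2$ globally'' is false---$K_5$ has every edge in a copy of $K_4$ (an M-circuit in $\R^2$) yet has $10>2\cdot 5-2$ edges---so you lose the guaranteed low-degree vertex on which your induction relies. The actual Jackson--Jord\'an argument is also not a Henneberg-type induction; it is a direct structural argument on the M-components of $G$ (which, under the hypothesis that every edge lies in an M-circuit, are induced rigid subgraphs partitioning $E(G)$), showing that the existence of two or more M-components forces a vertex separator of size at most two and hence contradicts $3$-connectivity. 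Your inductive route would need substantial repair even to get off the ground.
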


Theorem \ref{thm:jjMconn}, combined with the fact that complete graph $K_4$ is an M-circuit in $\R^2$ \cite{jackson2005}, leads us to the following proposition.

\begin{proposition} \label{prop:k4redundant}
	If graph $G$ is $3$-connected and each edge belongs to $K_4$, then $G$ is redundantly rigid. 
\end{proposition}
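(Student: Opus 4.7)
The proof is a direct two-step reduction using the machinery already assembled in Theorem \ref{thm:jjMconn} together with the stated fact that $K_4$ is an M-circuit in $\R^2$. The plan is to verify the hypotheses of Theorem \ref{thm:jjMconn}(i) so as to conclude that $G$ is M-connected, and then invoke Theorem \ref{thm:jjMconn}(ii) to conclude redundant rigidity.

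Concretely, I would first observe that since each edge $e$ of $G$ belongs to a copy of $K_4$ inside $G$, and since $K_4$ is an M-circuit in $\R^2$, the six edges of this embedded $K_4$ form an M-circuit in the rigidity matroid of $G$ that contains $e$. Hence every edge of $G$ lies in some M-circuit. Combined with the hypothesis that $G$ is $3$-connected, this puts $G$ exactly in the setting of Theorem \ref{thm:jjMconn}(i), and we conclude that $G$ is M-connected. Applying Theorem \ref{thm:jjMconn}(ii) immediately yields that $G$ is redundantly rigid in $\R^2$, which is what we want.

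There is essentially no obstacle; the substantive work is bundled into Theorem \ref{thm:jjMconn} and the matroidal fact that $K_4$ is an M-circuit, both of which the excerpt allows us to cite. The only small point to be careful about is the distinction between $K_4$ being an M-circuit as an abstract graph and the embedded copy of $K_4$ in $G$ giving an M-circuit of the rigidity matroid of $G$; this is immediate because dependence of an edge set in the generic $2$-dimensional rigidity matroid depends only on the subgraph induced by those edges, so the argument goes through without modification.
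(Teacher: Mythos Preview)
Your proposal is correct and matches the paper's argument exactly: the paper derives Proposition~\ref{prop:k4redundant} directly from Theorem~\ref{thm:jjMconn} together with the fact that $K_4$ is an M-circuit in $\R^2$, and your two-step reduction (verify~(i) to get M-connected, then apply~(ii)) is precisely this. Your remark about the embedded $K_4$ yielding an M-circuit of the rigidity matroid of $G$ is a valid clarification that the paper leaves implicit.
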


\section{Unique Registrability} \label{sec:back2prob}
In this section, we formulate the necessary and sufficient condition for uniqueness of solution to \ref{reg} (unique registrability). The main result of the section is Theorem \ref{thm:mainthm}, which gives such a condition under the following two assumptions:

\begin{itemize}
	\item [(A1)] Each patch has at least $d+1$ non-degenerate nodes.	
	\item [(A2)] The nodes of the network are in generic positions.
\end{itemize}

We briefly recall the rationale behind the assumptions. Under Assumption (A1), which is grounded in Proposition \ref{prop:nondegenunique}, uniqueness of the global coordinates and uniqueness of the patch transforms become equivalent, making unique registrability a well-defined notion. In practical applications, we can easily force this assumption for divide-and-conquer algorithms \cite{krishnan,knc2015,sanyal}. Assumption (A2), which is grounded in Proposition \ref{prop:riggen}, allows us to formulate conditions for unique registrability for \emph{almost every} problem instance based solely on the combinatorial structure of the problem. 

We now introduce the notion of a body graph, which will help us tie unique registrability to rigidity theory. For a network with correspondence graph $\Gamma_C = (\cS,\cP,\cE)$, consider a graph $\Gamma_B = (V,E)$, where $V = \cS$, and $E = \{(k_1,k_2) : k_1,k_2 \in P_i \text{ for some } i \in [1 : M]\}$. In other words, vertices of $\Gamma_B$ correspond to the nodes in the network, and we connect two vertices by an edge if and only if the corresponding nodes belong to a common patch  (see Fig. \ref{fig:bodygraph}). Observe that subgraph $H_i \subset \Gamma_B$ induced by nodes belonging to patch $P_i$ form a clique. We will call $\Gamma_B$ the body graph of the network. We derive the term body graph from \cite{gortler2013}, where a similar notion was introduced in the context of affine rigidity. 
Using the notion of body graph, we now state our main result, whose proof we defer to Section \ref{subsec:mainproof}.


\begin{figure*}[h]
	\centering
	\includegraphics[width=0.8\linewidth]{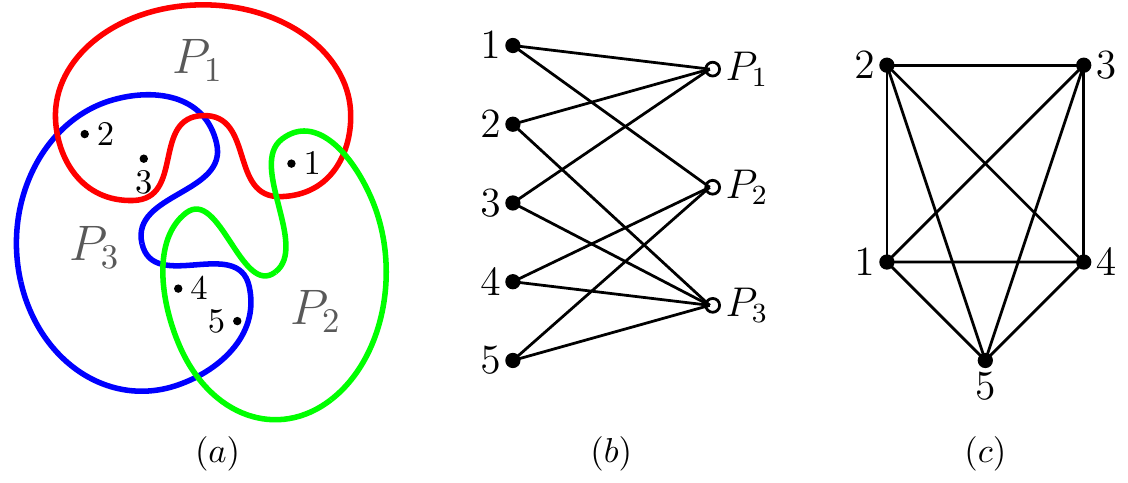}
	\caption
	{For this example, $\cS=[1 : 5]$ and $\cP=\{P_1,P_2,P_3\}$ with $P_1=\{1,2,3\}$, $P_2=\{1,4,5\}$ and $P_3=\{2,3,4,5\}$. $(a)$ Visualization of the node-patch correspondence, $(b)$ Correspondence graph $\Gamma_C=(\cS,\cP,\cE)$, $(c)$ Body graph $\Gamma_B$.
	}
	\label{fig:bodygraph}
\end{figure*}

\begin{theorem} 
	\label{thm:mainthm}
	Under assumptions (A1) and (A2), the ground-truth solution $( \bar{\bX}, \bar{\bm{\cR}} )$ is a unique solution of \ref{reg} if and only if the body graph $\Gamma_B$ is generically globally rigid.
\end{theorem}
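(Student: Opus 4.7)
\medskip

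\noindent\textbf{Proof plan for Theorem \ref{thm:mainthm}.}

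The plan is to translate the registration problem, via the body graph, into a question about equivalence versus congruence of $d$-dimensional frameworks, and then invoke Proposition \ref{prop:riggen} to pass between ``generic global rigidity of $\Gamma_B$'' and ``global rigidity of the specific framework $(\Gamma_B,\bar{\bX})$'' induced by the ground-truth configuration. The two key structural observations I will rely on are: (i) the patch transforms $\bar{\cR}_i$ and $\cR_i$ are Euclidean, hence preserve all within-patch distances; and (ii) by construction every patch $P_i$ induces a clique $H_i$ in $\Gamma_B$, so the edge set of $\Gamma_B$ is exactly the set of node pairs whose distance is forced by the local coordinates.

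\medskip

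\noindent\textbf{Sufficiency (generic global rigidity $\Rightarrow$ unique registrability).} Let $(\bX,\bm{\cR})$ be any solution to \ref{reg}. For every edge $(k_1,k_2)\in E(\Gamma_B)$ there is a common patch $P_i\ni k_1,k_2$, and applying $\bar{\cR}_i,\cR_i$ to $x_{k_1,i},x_{k_2,i}$ I would conclude $\|\bar{x}_{k_1}-\bar{x}_{k_2}\|=\|x_{k_1,i}-x_{k_2,i}\|=\|x_{k_1}-x_{k_2}\|$. Hence $(\Gamma_B,\bar{\bX})\sim(\Gamma_B,\bX)$. Under (A2), $\bar{\bX}$ is a generic configuration, so by Proposition \ref{prop:riggen} generic global rigidity of $\Gamma_B$ implies that $(\Gamma_B,\bar{\bX})$ is globally rigid, which yields a Euclidean transform $\cQ$ with $x_k=\cQ(\bar{x}_k)$ for all $k\in\cS$. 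This is uniqueness of global coordinates, which by (A1) and Proposition \ref{prop:nondegenunique} upgrades to uniqueness of patch transforms.

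\medskip

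\noindent\textbf{Necessity (unique registrability $\Rightarrow$ generic global rigidity).} Let $(\Gamma_B,\bY)$ be any framework equivalent to $(\Gamma_B,\bar{\bX})$. Since each patch $P_i$ induces a clique $H_i$ in $\Gamma_B$, all pairwise distances within $P_i$ agree between $\bar{\bX}$ and $\bY$; standard point-set rigidity then gives a Euclidean transform $\cT_i$ with $y_k=\cT_i(\bar{x}_k)$ for every $k\in P_i$. Defining $\cT'_i\vcentcolon=\cT_i\circ\bar{\cR}_i$, I would then verify that $(\bY,\bm{\cT}')$ also satisfies \ref{reg}: indeed for $(k,i)\in\cE$, $\cT'_i(x_{k,i})=\cT_i(\bar{\cR}_i(x_{k,i}))=\cT_i(\bar{x}_k)=y_k$. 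Unique registrability of $(\bar{\bX},\bar{\bm{\cR}})$ then forces a Euclidean transform $\cQ$ with $y_k=\cQ(\bar{x}_k)$ for all $k\in\cS$, i.e.\ $(\Gamma_B,\bar{\bX})\equiv(\Gamma_B,\bY)$. Thus $(\Gamma_B,\bar{\bX})$ is globally rigid; invoking Proposition \ref{prop:riggen} once more, the fact that $\bar{\bX}$ is generic by (A2) promotes this to generic global rigidity of $\Gamma_B$.

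\medskip

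\noindent\textbf{Where I expect the friction to be.} Once the body-graph dictionary is set up, both implications reduce to bookkeeping, and the only non-trivial inputs are Propositions \ref{prop:nondegenunique} and \ref{prop:riggen}, both already available. The two places needing a little care are: (a) justifying that each clique $H_i$ in $\Gamma_B$ really ``inherits'' a Euclidean transform $\cT_i$ between $\bar{\bX}|_{P_i}$ and $\bY|_{P_i}$ from pairwise-distance equality (a standard but not completely free fact, and the place where the clique structure of $H_i$, rather than merely its edge set, is used); and (b) in the sufficiency direction, being clear that Proposition \ref{prop:nondegenunique} is what converts the geometric conclusion ``global coordinates agree up to $\cQ$'' into the stronger statement ``patch transforms agree up to $\cQ$'', so that uniqueness in the sense defined before Observation \ref{obs1} really holds. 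Neither step is a genuine obstacle, but both should be stated explicitly so that the role of assumptions (A1) and (A2) is transparent.
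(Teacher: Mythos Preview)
Your proposal is correct and follows essentially the same route as the paper: the paper formalizes the dictionary you set up by introducing an intermediate notion of ``node-patch framework'' (with its own equivalence and congruence) and records your two structural observations as separate propositions (equivalence/congruence of node-patch frameworks $\Leftrightarrow$ equivalence/congruence of body-graph frameworks, and solutions of \ref{reg} $\Leftrightarrow$ equivalent node-patch frameworks), but the underlying argument and the use of Propositions \ref{prop:nondegenunique} and \ref{prop:riggen} are identical to yours.
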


The import of Theorem \ref{thm:mainthm} lies in the fact that generic global rigidity in an arbitrary dimension can be tested using a randomized polynomial-time algorithm \cite{gortler2010}.
Moreover, combining Theorem \ref{thm:mainthm} with the combinatorial characterization of generic global rigidity in Theorem \ref{thm:jj2d}, and using additional results from rigidity theory, we get the following characterization of unique registrability for a two-dimensional network, whose proof we defer to Section \ref{subsec:body3suffproof}.

\begin{theorem} \label{thm:body3suff}
	Under assumptions (A1) and (A2), a network is uniquely registrable in $\R^2$ if and only if the body graph $\Gamma_B$ is $3$-connected.
\end{theorem}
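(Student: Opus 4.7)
The plan is to combine Theorem \ref{thm:mainthm} with the Jackson--Jordan characterization (Theorem \ref{thm:jj2d}) and thus reduce unique registrability in $\R^2$ to showing that $3$-connectivity of the body graph $\Gamma_B$ is equivalent to $\Gamma_B$ being both $3$-connected and redundantly rigid in $\R^2$. Since the latter trivially implies the former, the substantive task is the single implication: a $3$-connected body graph is automatically redundantly rigid in $\R^2$. The small corner case $|V(\Gamma_B)| \leq 3$ (where $\Gamma_B$ is forced by (A1) to be a single triangle) can be handled directly, as a triangle is trivially uniquely registrable and is treated as the base case of the Jackson--Jordan theorem.

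For the necessity direction (unique registrability $\Rightarrow$ $3$-connectivity of $\Gamma_B$), I would appeal to Hendrickson's theorem (Theorem \ref{thm:hendrick}(i)): in $\R^2$, generic global rigidity forces $(d+1) = 3$-vertex-connectivity whenever $|V(\Gamma_B)| \geq d+2 = 4$, with the degenerate case again being the triangle already discussed.

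For the sufficiency direction, the plan is to invoke Proposition \ref{prop:k4redundant}: granted $3$-connectivity of $\Gamma_B$, it suffices to verify that every edge of $\Gamma_B$ belongs to some $K_4$ subgraph. For an edge $(u,v)$ arising from a patch $P_i$, this is immediate whenever $|P_i| \geq d+2 = 4$, since $P_i$ then induces a clique of size at least $4$ in $\Gamma_B$ containing $(u,v)$.

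The main obstacle is the remaining subcase $|P_i| = 3$, say $P_i = \{u, v, w\}$, where one must exhibit a fourth vertex $x \notin \{u,v,w\}$ adjacent in $\Gamma_B$ to each of $u$, $v$, and $w$. My plan is to apply Menger's theorem to produce internally disjoint paths between appropriate pairs among $u, v, w$ in $\Gamma_B$, and then exploit the clique-union structure of body graphs (every edge on such a path originates from a patch-clique of size $\geq 3$ by (A1)) to force at least one patch to intersect all three neighbourhoods of $u,v,w$ simultaneously. I expect this combinatorial case analysis---playing the vertex-connectivity of $\Gamma_B$ off against the rigid clique structure imposed by the patches---to be the technical crux of the proof, and anticipate that the argument will proceed by carefully tracking how patches can share vertices along the Menger paths without admitting a small vertex-cut.
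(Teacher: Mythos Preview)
Your overall architecture---reducing to Theorem \ref{thm:jj2d} via Theorem \ref{thm:mainthm}, handling necessity through Theorem \ref{thm:hendrick}, and invoking Proposition \ref{prop:k4redundant} for sufficiency when every patch has at least $4$ nodes---matches the paper exactly. The gap is in your treatment of the subcase $|P_i|=3$. The claim that $3$-connectivity of $\Gamma_B$ together with the clique structure forces every edge of a $3$-node patch to lie in a $K_4$ is \emph{false}, so the Menger-based case analysis you outline cannot succeed.

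A concrete counterexample: take $\cS=[1:6]$ and patches $P_1=\{1,2,3\}$, $P_2=\{1,4,5\}$, $P_3=\{2,4,6\}$, $P_4=\{3,5,6\}$. The resulting body graph is the octahedron $K_{2,2,2}$ (the missing edges are $1\text{-}6$, $2\text{-}5$, $3\text{-}4$), which is $4$-connected. Yet the edge $(1,2)$ lies in no $K_4$: the common neighbours of $1$ and $2$ are exactly $\{3,4\}$, and $3\text{-}4$ is a non-edge. So Proposition \ref{prop:k4redundant} cannot be applied directly to $\Gamma_B$ here, even though $\Gamma_B$ is in fact generically globally rigid in $\R^2$.

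The paper circumvents this obstruction not by finding a $K_4$ inside $\Gamma_B$, but by \emph{enlarging} $\Gamma_B$: to each $3$-node patch it adjoins a fresh degree-$3$ vertex attached to that patch's triangle. Each such addition preserves $3$-connectivity, and after all additions every patch has at least $4$ nodes, so Proposition \ref{prop:k4redundant} and Theorem \ref{thm:jj2d} yield generic global rigidity of the augmented graph. The return to $\Gamma_B$ is via Proposition \ref{prop:addvertexthm}: deleting a vertex whose neighbourhood is a clique preserves generic global rigidity. This augment-then-peel manoeuvre is the missing idea in your plan.
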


The implication of Theorem \ref{thm:body3suff} is that (assuming each patch has at least $3$ nodes) we need only test for $3$-connectivity to establish generic global rigidity of the body graph in $\R^2$. We need not perform an additional check for redundant rigidity, as required by Theorem \ref{thm:jj2d}. As is well-known, $3$-connectivity  can be tested efficiently using linear-time algorithms \cite{jungnickel2008}.

\section{Quasi Connectivity} \label{sec:quasiconn}
In this section, we address the conjecture posed in \cite{sanyal} which asserts that, under Assumption (A1) and the assumption that every set of $d+1$ nodes is non-degenerate, quasi $(d+1)$-connectivity of the correspondence graph $\Gamma_C$ is sufficient for unique registrability in $\R^d$. We prove that, under Assumptions (A1) and (A2), the conjecture holds for $d=2$, but fails to hold for $d \geq 3$. We first recall the definition of quasi connectivity \cite{sanyal}.

\begin{definition} [Quasi $k$-connectivity] \label{quasiconndef}
	The correspondence graph  $\Gamma_C = (\cS,\cP,\cE)$ is said to be quasi $k$-connected if any two vertices in $\cP$ have $k$ or more $\cS$-disjoint paths between them. (A set of paths is $\cS$-disjoint if no two paths have a vertex from $\cS$ in common.)
\end{definition}

\begin{observation} \label{obs:defquasi}
	If the correspondence graph $\Gamma_C$ is quasi $k$-connected, we can infer the following by dint of Definition \ref{quasiconndef}:
	
	\begin{enumerate} [(a)]
		\item There are at least $k$ participating nodes in every patch. (By a participating node, we mean a node that belongs to at least two patches.)
		
		\item Let $\Gamma_B$ be the body graph of $\Gamma_C$. Let $H_i$ be the clique of $\Gamma_B$ induced by patch $P_i$ where $i \in [1 : M]$. Then there are at least $k$ disjoint $H_i$-$H_j$ paths in the body graph, for every $1 \leq i<j \leq M$ (cf. Fig. \ref{fig:counterex1}).
	\end{enumerate}	
\end{observation}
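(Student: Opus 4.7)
The plan is to treat the two parts separately, using only the bipartite structure of $\Gamma_C$ and the defining adjacency rule of $\Gamma_B$ (two nodes are adjacent in $\Gamma_B$ iff they share a patch).

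For part (a), I would fix a patch $P_i$ and pick any other patch $P_j$ (if $|\cP| < 2$ the statement is vacuous). Quasi $k$-connectivity supplies $k$ $\cS$-disjoint paths between $P_i$ and $P_j$ in $\Gamma_C$. Because $\Gamma_C$ is bipartite with parts $\cS$ and $\cP$, every such path must alternate between the two parts, so the first edge of each path goes from $P_i$ to some vertex $v \in \cS$ with $v \in P_i$. The $\cS$-disjointness of the paths forces these $k$ starting vertices to be distinct. Each such $v$ is followed on its path by another patch vertex $Q \in \cP$ (possibly $Q = P_j$), and by the adjacency rule $v \in Q$ as well. Since $Q \neq P_i$, this $v$ belongs to at least two patches, hence is participating. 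This gives $k$ distinct participating nodes in $P_i$.

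For part (b), I would take an arbitrary collection of $k$ $\cS$-disjoint paths between $P_i$ and $P_j$ in $\Gamma_C$ and produce the claimed $\Gamma_B$-paths by stripping out the $\cP$-vertices. A typical such path has the form
\begin{equation*}
P_i = Q_0 - v_1 - Q_1 - v_2 - Q_2 - \cdots - v_\ell - Q_\ell = P_j,
\end{equation*}
where $v_1,\ldots,v_\ell \in \cS$ are distinct. Then $v_1 \in P_i$ (so $v_1 \in V(H_i)$), $v_\ell \in P_j$ (so $v_\ell \in V(H_j)$), and for $1 \le m < \ell$ the two consecutive vertices $v_m, v_{m+1}$ both lie in $Q_m$, so they are adjacent in $\Gamma_B$. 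Hence $v_1 - v_2 - \cdots - v_\ell$ is a bona fide $H_i$-$H_j$ path in $\Gamma_B$ (if $\ell = 1$, it degenerates to the single vertex $v_1 \in V(H_i) \cap V(H_j)$, which still qualifies). Since the starting paths were $\cS$-disjoint, their internal $\cS$-sequences are pairwise disjoint, so the $k$ induced $H_i$-$H_j$ paths in $\Gamma_B$ are vertex-disjoint.

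I do not anticipate a serious obstacle: both parts amount to using the bipartite alternation in $\Gamma_C$ to read off structure in $\Gamma_B$. The only mild subtlety is handling length-$1$ paths $P_i - v - P_j$ in $\Gamma_C$, which collapse to a single shared vertex of $H_i$ and $H_j$; that case must be explicitly acknowledged so that the counting in (a) and the disjointness claim in (b) both remain correct.
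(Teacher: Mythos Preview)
Your argument is correct and is exactly the verification the paper has in mind; the paper does not give a separate proof of this observation, stating only that it follows ``by dint of Definition~\ref{quasiconndef}''. Your bipartite-alternation reading of the $\cS$-disjoint paths, together with the clique structure of the $H_i$, is the intended (and only natural) route, including your handling of the degenerate length-two case $P_i - v - P_j$.
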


We relate quasi connectivity of the correspondence graph $\Gamma_C$ to connectivity of the associated body graph $\Gamma_B$ in the following theorem, whose proof we defer to Section \ref{subsec:proofquasibodykconn}.

\begin{theorem}[Connectivity of $\Gamma_C$ and $\Gamma_B$] \label{thm:quasibodykconn} \leavevmode
	\begin{enumerate} [(i)]
		\item If the correspondence graph $\Gamma_C$ is quasi $k$-connected, then the body graph $\Gamma_B$ is $k$-connected. 
		
		\item If each patch has at least $k$ nodes and the body graph $\Gamma_B$ is  $k$-connected, then the correspondence graph $\Gamma_C$ is quasi $k$-connected.
	\end{enumerate}
\end{theorem}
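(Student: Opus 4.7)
Both directions rest on the same dictionary between walks in $\Gamma_C$ and walks in $\Gamma_B$: an edge $(s_{l-1},s_l) \in E(\Gamma_B)$ exists exactly when some patch contains both endpoints, so a walk $s_0$-$\cdots$-$s_n$ in $\Gamma_B$ promotes to a $\Gamma_C$-walk by inserting an appropriate patch vertex between each consecutive pair of $\cS$-vertices, and conversely the $\cS$-vertices of any $\Gamma_C$-walk form a $\Gamma_B$-walk. Under this dictionary, $\cS$-disjointness of $\Gamma_C$-paths corresponds to vertex-disjointness of the induced $\Gamma_B$-paths. The plan is to combine this with Menger's theorem, arguing by contraposition in each direction.

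For part (i), suppose $\Gamma_B$ fails to be $k$-connected; pick a separator $C \subset \cS$ with $|C| < k$ and two vertices $u,v \in \cS \setminus C$ lying in distinct components of $\Gamma_B \setminus C$, together with patches $P_i \ni u$ and $P_j \ni v$. I would first observe that $u$ and $v$ cannot share a patch (the induced edge of $\Gamma_B$ would persist in $\Gamma_B \setminus C$), so $P_i \ne P_j$. Next I would show that $C$ also separates $P_i$ from $P_j$ in $\Gamma_C$: any $\Gamma_C$-path between them avoiding $C$ would, via the dictionary and possibly a small shortcut if $u$ or $v$ coincides with an internal $\cS$-vertex, extend to a $u$-$v$ walk in $\Gamma_B \setminus C$, which is impossible. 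Finally, every $\Gamma_C$-path from $P_i$ to $P_j$ uses at least one $\cS$-vertex of $C$, and $\cS$-disjoint paths use distinct such vertices, so there are at most $|C|<k$ pairwise $\cS$-disjoint $P_i$-$P_j$ paths, contradicting quasi $k$-connectivity.

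For part (ii), fix distinct patches $P_i, P_j$ and form the auxiliary graph $G'$ by adjoining to $\Gamma_B$ two fresh vertices $p_i^{\ast},p_j^{\ast}$, joining $p_i^{\ast}$ to every element of $P_i$ and $p_j^{\ast}$ to every element of $P_j$. The dictionary identifies $\cS$-disjoint $P_i$-$P_j$ paths in $\Gamma_C$ with internally vertex-disjoint $p_i^{\ast}$-$p_j^{\ast}$ paths in $G'$ (the first and last $\cS$-vertices of the $\Gamma_C$-path serve as the neighbours of $p_i^{\ast}$ and $p_j^{\ast}$), so by Menger's theorem it suffices to show that every $(p_i^{\ast},p_j^{\ast})$-cut in $G'$ has size at least $k$. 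I would argue this by contradiction: let $C'\subseteq \cS$ with $|C'|<k$ be such a cut; since $|P_i|,|P_j|\ge k$, the sets $P_i \setminus C'$ and $P_j \setminus C'$ are non-empty. If they share an element $c$, then $p_i^{\ast}$-$c$-$p_j^{\ast}$ survives in $G'\setminus C'$, a contradiction; otherwise, picking any $a\in P_i \setminus C'$ and $b\in P_j \setminus C'$, the set $C'$ must separate $a$ from $b$ in $\Gamma_B$ (else the surviving $a$-$b$ path, prepended with $p_i^{\ast}$ and appended with $p_j^{\ast}$, would give a surviving $p_i^{\ast}$-$p_j^{\ast}$ path), contradicting $k$-connectivity of $\Gamma_B$.

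The chief technical nuisance I anticipate is the bookkeeping around the small shortcuts in part (i): when an internal $\cS$-vertex of the hypothetical $\Gamma_C$-path coincides with $u$ or $v$, one must excise the corresponding prefix or suffix of the $\Gamma_B$-walk without accidentally introducing a vertex of $C$. Since shortcutting only deletes vertices, and the deleted vertices (being internal $\cS$-vertices of a $\Gamma_C$-path that already avoids $C$) are not in $C$, this is ultimately routine; nevertheless, it is the one spot where the translation must be spelled out carefully.
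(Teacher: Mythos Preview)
Your proposal is correct. For part~(ii) you follow essentially the same route as the paper: adjoin two auxiliary vertices to $\Gamma_B$, one attached to each patch-clique, and extract $k$ internally disjoint paths between them. The only cosmetic difference is that you verify the Menger cut condition directly, whereas the paper invokes the standard fact that adjoining a vertex of degree at least $k$ to a $k$-connected graph preserves $k$-connectivity.

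For part~(i), however, you take a genuinely different route. The paper argues \emph{constructively}: given arbitrary $a,b\in V(\Gamma_B)$ it exhibits $k$ independent $a$-$b$ paths via a case analysis. In Case~1 ($a$ and $b$ in different patches $P_i,P_j$) it lifts the $k$ disjoint $H_i$-$H_j$ paths coming from quasi $k$-connectivity; in Case~2 ($a,b$ both in some $P_l$) it takes paths inside the clique $H_l$ and, when $|P_l|=k$ exactly, hunts down one extra path through two further sub-cases. Your contrapositive via separators is shorter and sidesteps all of this: a separator $C$ of size $<k$ in $\Gamma_B$ immediately blocks $k$ $\cS$-disjoint $P_i$-$P_j$ paths in $\Gamma_C$. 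The paper's approach buys an explicit description of the paths; yours buys brevity. Incidentally, the ``shortcut'' bookkeeping you flag as the chief nuisance is unnecessary: a \emph{walk} (with repetitions) in $\Gamma_B\setminus C$ already certifies that $u$ and $v$ lie in the same component, so no excision is needed. One small point to add: your opening ``pick a separator $C$ with $|C|<k$'' tacitly assumes $|V(\Gamma_B)|>k$; you should note, as the paper does by assuming no two patches are identical, that quasi $k$-connectivity together with at least two distinct patches forces $|\cS|\geq k+1$.
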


We note some corollaries of Theorem \ref{thm:quasibodykconn}. Corollary \ref{cor:quasinecc} was already proved in \cite{sanyal}; we give a short proof using the body graph. Corollary \ref{cor:quasi3suff} establishes the conjecture posed in \cite{sanyal} for $d=2$.

\begin{corollary} \label{cor:quasinecc}
	Under Assumptions (A1) and (A2), quasi $(d+1)$-connectivity of  $\Gamma_C$ is a necessary condition for unique registrability in $\R^d$.
\end{corollary}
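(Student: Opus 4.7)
The plan is to compose three results already established in the paper, so essentially no new ideas are required. I begin by assuming the network is uniquely registrable. By Theorem \ref{thm:mainthm}, which applies under the standing assumptions (A1) and (A2), this is equivalent to generic global rigidity of the body graph $\Gamma_B$ in $\R^d$.

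Second, I would invoke Hendrickson's necessary condition, part (i) of Theorem \ref{thm:hendrick}, which asserts that a generically globally rigid graph on at least $d+2$ vertices must be $(d+1)$-vertex-connected. Under (A1), $\Gamma_B$ contains a $K_{d+1}$-clique, so $|V(\Gamma_B)| \ge d+1$, and the only way the hypothesis $|V(\Gamma_B)| \ge d+2$ can fail is when $N = d+1$ and all patches coincide with $\cS$. That degenerate case is immediate: every pair $P_i, P_j \in \cP$ is joined by the $d+1$ length-two paths $P_i - k - P_j$, $k \in \cS$, which are pairwise $\cS$-disjoint, so quasi $(d+1)$-connectivity of $\Gamma_C$ holds directly.

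Finally, for the generic case $|V(\Gamma_B)| \ge d+2$, I would apply part (ii) of Theorem \ref{thm:quasibodykconn} with $k = d+1$. Since (A1) guarantees each patch has at least $d+1$ nodes and $\Gamma_B$ has just been shown to be $(d+1)$-connected, the theorem converts this into quasi $(d+1)$-connectivity of the correspondence graph $\Gamma_C$, completing the chain. I do not anticipate any substantive obstacle: every step is a direct appeal to a previously stated theorem, and the only subtlety is the small-$N$ corner case dispatched above.
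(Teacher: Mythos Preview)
Your proposal is correct and follows essentially the same three-step chain as the paper: Theorem~\ref{thm:mainthm} to pass to generic global rigidity of $\Gamma_B$, Hendrickson's Theorem~\ref{thm:hendrick}(i) to extract $(d+1)$-connectivity, and Theorem~\ref{thm:quasibodykconn}(ii) (together with (A1)) to convert this into quasi $(d+1)$-connectivity of $\Gamma_C$. The only difference is that you explicitly dispatch the degenerate case $N=d+1$ needed for the hypothesis of Theorem~\ref{thm:hendrick}, which the paper's proof leaves implicit.
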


\begin{proof}
	From Theorem \ref{thm:mainthm}, unique registrability is equivalent to global rigidity of $\Gamma_B$. From Theorem \ref{thm:hendrick}, $(d+1)$-connectivity of $\Gamma_B$ is a necessary condition for generic global rigidity of $\Gamma_B$ in $\mathbb{R}^d$. The result now follows from Theorem \ref{thm:quasibodykconn}.
\end{proof}

\begin{corollary} \label{cor:quasi3suff}
	Under Assumptions (A1) and (A2), quasi $3$-connectivity of the correspondence graph $\Gamma_C$ is sufficient for unique registrability in $\R^2$.
\end{corollary}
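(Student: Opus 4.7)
The plan is to derive this corollary by directly chaining two previously established results, Theorem \ref{thm:quasibodykconn} and Theorem \ref{thm:body3suff}. The key observation is that quasi 3-connectivity is a statement about the correspondence graph $\Gamma_C$, while the unique-registrability characterization available to us (Theorem \ref{thm:body3suff}) is phrased in terms of 3-connectivity of the body graph $\Gamma_B$; the bridge between these two graph-theoretic notions is exactly what Theorem \ref{thm:quasibodykconn} provides.

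Concretely, I would proceed as follows. First, instantiate Theorem \ref{thm:quasibodykconn}(i) with $k = 3$: assuming quasi 3-connectivity of $\Gamma_C$, we immediately obtain that the body graph $\Gamma_B$ is 3-vertex-connected. Second, invoke Theorem \ref{thm:body3suff}, which asserts that under Assumptions (A1) and (A2), 3-connectivity of $\Gamma_B$ is sufficient (in fact equivalent) for unique registrability of the network in $\R^2$. Chaining these two implications yields the desired conclusion. Note that Assumption (A1) is needed both to invoke Theorem \ref{thm:body3suff} and to make unique registrability a well-defined notion (via Proposition \ref{prop:nondegenunique}), while Assumption (A2) is needed so that the rigidity-theoretic characterization applies generically.

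There is essentially no main obstacle at this stage: all the heavy lifting has been deferred to the earlier theorems. The nontrivial ingredients, namely (a) the reduction of unique registrability to generic global rigidity of $\Gamma_B$ (Theorem \ref{thm:mainthm}), (b) the combinatorial simplification in dimension two that eliminates the redundant-rigidity requirement via Proposition \ref{prop:k4redundant} and Theorem \ref{thm:jj2d} (yielding Theorem \ref{thm:body3suff}), and (c) the translation between quasi $k$-connectivity of the bipartite graph $\Gamma_C$ and $k$-vertex-connectivity of the body graph $\Gamma_B$ (Theorem \ref{thm:quasibodykconn}), are all already in hand. The proof of the corollary is therefore a two-line application of these tools. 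The only conceptual care needed is to emphasize that Theorem \ref{thm:body3suff} is a two-dimensional phenomenon that relies on the Jackson--Jordan characterization and hence does not generalize to $d \geq 3$, which is why the analogous implication fails in higher dimensions (and motivates the counterexamples discussed later in the paper).
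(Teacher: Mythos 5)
Your proposal is correct and matches the paper's own proof, which likewise derives the corollary by applying Theorem \ref{thm:quasibodykconn}(i) with $k=3$ to obtain $3$-connectivity of $\Gamma_B$ and then invoking Theorem \ref{thm:body3suff}. No gaps; the additional remarks on the roles of (A1) and (A2) are consistent with the paper.
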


\begin{proof} 
	Follows from Theorem \ref{thm:quasibodykconn} and Theorem \ref{thm:body3suff}.
\end{proof}

Corollary \ref{cor:quasi3suff}, in effect, says that the constraints imposed by quasi $3$-connectivity of $\Gamma_C$ ensure that $\Gamma_B$ is redundantly rigid in addition to being $3$-connected, and hence generically globally rigid in $\R^2$. But this trend does not carry over to $d \geq 3$. We demonstrate it with two examples for $d=3$ (which appear in \cite{jordan2016}), and then note a prescription for generating such counterexamples in higher dimensions. 

\textbf{Example 1.}
\begin{figure}[h]
	\centering
	\includegraphics[width=1\linewidth]{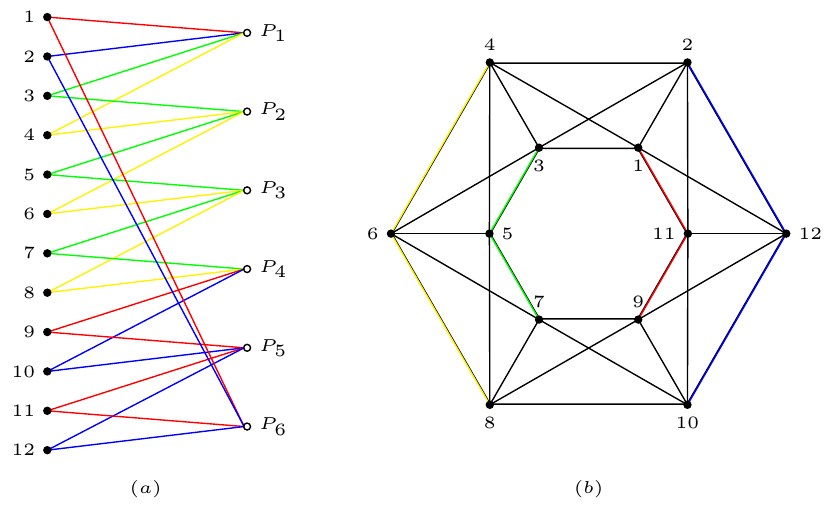}
	\caption{
	The figure shows a counterexample to the sufficiency of quasi $4$-connectivity of the correspondence graph for unique registrability in $\R^3$.
	$(a)$ Correspondence graph $\Gamma_{C1}$, $(b)$ Body graph $\Gamma_{B1}$. The colored paths in $(a)$ show the four $\cS$-disjoint paths between $P_1$ and $P_4$. The corresponding disjoint $H_1$-$H_4$ paths in the body graph $\Gamma_{B1}$ are colored in $(b)$, where $H_1$ and $H_4$ are cliques induced by patches $P_1$ and $P_4$ (see text for details).}
	\label{fig:counterex1}
\end{figure}
Let $\cS = [1 : 12]$, and $\cP = \{P_1,\cdots,P_6\}$. That is, we have $12$ nodes and $6$ patches. Consider the following node-patch correspondence:	
\begin{equation} \label{counterex1}
\begin{aligned}
& P_1 = \{1,2,3,4\}, P_2 = \{3,4,5,6\}, \cdots, \\
& P_5 = \{9,10,11,12\}, P_6 = \{11,12,1,2\}.
\end{aligned}
\end{equation}
The correspondence graph $\Gamma_{C1}$ and the associated body graph $\Gamma_{B1}$ are shown in Fig. \ref{fig:counterex1}. It is easy to verify that $\Gamma_{C1}$ is quasi $4$-connected, or equivalently (Theorem \ref{thm:quasibodykconn}), that $\Gamma_{B1}$ is $4$-connected. But, it can be shown \cite{jordan2016} that the body graph $\Gamma_{B1}$ is minimally rigid in $\R^3$, i.e. $\Gamma_{B1}$ is generically locally rigid, but removing any edge destroys generic local rigidity. Hence $\Gamma_{B1}$ is not redundantly rigid in $\R^3$. This implies, from Theorem \ref{thm:hendrick}, that $\Gamma_{B1}$ is not generically globally rigid, and thus (Theorem \ref{thm:mainthm}), the network is not uniquely registrable in $\R^3$.

\textbf{Example 2.}
\begin{figure}[h]
	\centering
	\includegraphics[width=1\linewidth]{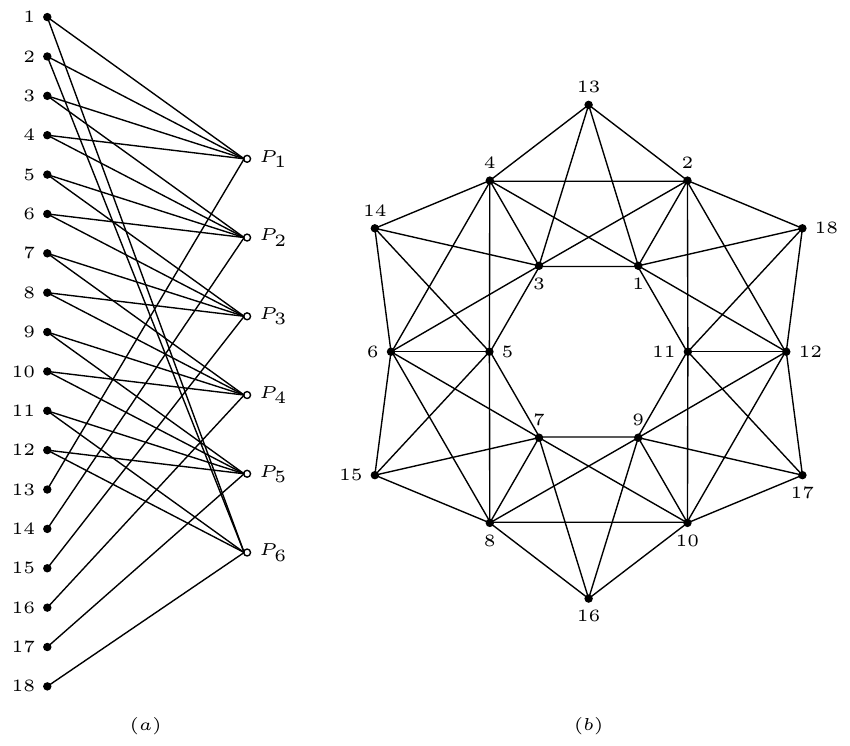}
	\caption{The figure shows a counterexample to sufficiency of quasi $4$-connectivity of the correspondence graph for unique registrability in $\R^3$ even when the body graph is redundantly rigid. $(a)$ Correspondence graph $\Gamma_{C2}$, $(b)$ Body graph $\Gamma_{B2}$ (see text for details).}
	\label{fig:counterex2}
\end{figure}
In this example, we will see that quasi $(d+1)$-connectivity of the correspondence graph is not sufficient for generic global rigidity of the body graph, even when we ensure that the body graph be redundantly rigid. Let $\cS = [1 : 18]$, and $\cP = \{P_1,\cdots,P_6\}$, where
\begin{equation} \label{counterex2}
\begin{aligned}
& P_1 = \{1,2,3,4,\mathit{13}\}, P_2 = \{3,4,5,6,\mathit{14}\}, \cdots, \\
& P_5 = \{9,10,11,12,\mathit{17}\}, P_6 = \{11,12,1,2,\mathit{18}\}.
\end{aligned}
\end{equation}
That is, we have added a non-participating node in each patch of Example 1. The correspondence graph $\Gamma_{C2}$, and the associated body graph $\Gamma_{B2}$ are shown in Fig. \ref{fig:counterex2}. It is easy to verify that $\Gamma_{C2}$ is quasi $4$-connected, or equivalently, that $\Gamma_{B2}$ is $4$-connected. Moreover, $\Gamma_{B2}$ is redundantly rigid \cite{jordan2016}. But, from the fact that $\Gamma_{B1}$ in Example 1 is not generically globally rigid in $\R^3$, it can be deduced (Proposition \ref{prop:addvertexthm}) that $\Gamma_{B2}$ is also not generically globally rigid in $\R^3$. Thus, the network is not uniquely registrable in $\R^3$.

Graphs such as $\Gamma_{B2}$ in Example 2 above, which satisfy both conditions of Theorem \ref{thm:hendrick}, but are not generically globally rigid in $\R^d$, are known as \emph{H-graphs}. By an operation called \emph{coning}, which takes a graph $G$ and adds a new vertex adjacent to every vertex of $G$, a $d$-dimensional H-graph can be turned into a $(d+1)$-dimensional H-graph \cite{jordan2016,frank2011,connelly2010}. In terms of node-patch correspondence, this equates to adding a new node that belongs to every patch. Thus, by applying $d-3$ coning operations to $\Gamma_{B2}$, we can generate a network with a quasi $(d+1)$-connected correspondence graph, which is not uniquely registrable in $\R^d$ for $d>3$.

\section{Discussion}	\label{sec:concl}
In this paper, we looked at the notion of unique registrability of a network through the lens of rigidity theory. Given that there are two families of unknowns inherent in the problem---the global coordinates and the patch transforms---we first addressed the question as to what uniqueness precisely means for the registration problem. We saw that a mild assumption of non-degeneracy makes the notion of uniqueness equivalent for both families of unknowns, which, in turn, makes the notion of unique registrability well-defined. We then introduced the notion of the body graph of a network, which allowed us to reformulate the question of unique registrability into a question about graph rigidity. Specifically, we concluded that unique registrability is equivalent to global rigidity of the body graph. This equivalence opened up the possibility of using non-trivial results from rigidity theory. In particular, we showed that the necessary condition of quasi $(d+1)$-connectivity of the correspondence graph, which was conjectured in \cite{sanyal} to be sufficient for unique registrability in $\R^d$, is indeed sufficient for $d=2$, but fails to be so for $d \geq 3$. The practical utility of these characterizations is that they lead to efficiently testable criteria for unique registrability. In particular, to ascertain unique registrability in $\R^2$, we only need to test quasi $3$-connectivity of the correspondence graph or $3$-connectivity of the body graph (whichever is less expensive). As is well known, three-connectivity can be tested efficiently using linear-time algorithms \cite{jungnickel2008}, whereas, quasi $3$-connectivity can be tested using a variant of existing flow-based algorithms \cite{sanyal}. For $d \geq 3$, unique registrability can be tested simply by testing generic global rigidity of the body graph, for which there exists a polynomial-time randomized algorithm \cite{gortler2010}. The practical utility of these tests is that they can be integrated into existing divide-and-conquer algorithms, including \cite{sanyal}, to ascertain whether the chosen subnetworks can be uniquely registered to localize the entire network.

\section*{Acknowledgements}
The authors thank the editor and the anonymous reviewers for their careful examination of the manuscript and for their useful suggestions; incorporating these suggestions made the presentation more streamlined. K.N. Chaudhury was supported by the DST-SERB Grant SERB/F/6047/2016-2017 from the Department of Science and Technology, Government of India.

\section{Technical Proofs}	\label{sec:supplementary}
In this section, we give proofs for Theorem \ref{thm:mainthm}, Theorem \ref{thm:body3suff} and Theorem \ref{thm:quasibodykconn}.

\subsection{Proof of Theorem \ref{thm:mainthm}} \label{subsec:mainproof}
We show that unique registrability is equivalent to global rigidity of the body graph framework corresponding to the ground-truth. The assumption of genericity (A2) along with Proposition \ref{prop:riggen} (genericity of global rigidity) allows us to remove the dependence on any particular framework, and the theorem is proved. 
We first make some definitions specialized to the registration problem which allow us to express the question of uniqueness registrability in a form amenable to a rigidity theoretic analysis. 

\begin{definition} [Node-patch framework]
	Given a correspondence graph $\Gamma_C = (\cS,\cP,\cE)$, and a map $\textup{\x} : \cS \rightarrow \R^d$ that assigns coordinates to the nodes, the pair $(\Gamma_C,\textup{\x})$ is called a node-patch framework.	
\end{definition}

\begin{definition} [Equivalence of node-patch frameworks] \label{def:ppequiv}
	Two node-patch frameworks $(\Gamma_C,\textup{\x})$ and $(\Gamma_C,\y)$ are said to be equivalent, denoted by $(\Gamma_C,\textup{\x}) \sim (\Gamma_C,\y)$, if $\textup{\x}(k) = \cQ_i  \y(k)$,  $(k,i) \in \cE$, where $\cQ_i$ is a rigid transform.
\end{definition}

\begin{definition} [Congruence of node-patch frameworks] \label{def:ppcong}
	Two node-patch frameworks $(\Gamma_C,\textup{\x})$ and $(\Gamma_C,\y)$ are said to be congruent, denoted by $(\Gamma_C,\textup{\x}) \equiv (\Gamma_C,\y)$, if $\textup{\x}(k) = \cQ  \y(k)$, $\ k \in \cS$, where $\cQ$ is a rigid transform.
\end{definition}

\noindent Given a solution $(\bX,\bm{\cR})$ to \ref{reg}, where $\bX = ( x_k )_{k=1}^N$, $\bm{\cR} = ( \cR_i )_{i=1}^M$, we will denote by $\x$ the map that assigns to node $k$ the coordinate $x_k$, and say that $(\Gamma_C,\x)$ is the node-patch framework corresponding to the solution $(\bX,\bm{\cR})$. 

\begin{proposition} \label{prop:solequiv}
	Let $(\bX,\bm{\cR})$ and $(\bY,\bm{\cT})$ be two solutions to \ref{reg}. Then the corresponding node-patch frameworks $(\Gamma_C,\textup{\x})$ and $(\Gamma_C,\y)$ are equivalent.
\end{proposition}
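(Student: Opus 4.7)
The plan is to unpack the definitions directly. The registration problem \ref{reg} fixes the local coordinates $\{x_{k,i} : (k,i) \in \cE\}$; these are the shared data against which both solutions $(\bX,\bm{\cR})$ and $(\bY,\bm{\cT})$ are written. What changes between the two solutions is merely the choice of rigid transforms mapping each patch's local frame into the global one. Consequently, for each patch one should be able to exhibit an explicit rigid transform carrying $\y$ to $\x$ on that patch, which is exactly what Definition \ref{def:ppequiv} demands.

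The key steps, in order, are as follows. First, fix an index $i \in [1:M]$ and any node $k \in P_i$. From the assumption that $(\bY,\bm{\cT})$ solves \ref{reg}, we have $y_k = \cT_i(x_{k,i})$. Since $\cT_i$ is a Euclidean (hence invertible) transform, this rewrites as $x_{k,i} = \cT_i^{-1}(y_k)$. Substituting this into the relation $x_k = \cR_i(x_{k,i})$ (which holds because $(\bX,\bm{\cR})$ solves \ref{reg}) yields $x_k = (\cR_i \circ \cT_i^{-1})(y_k)$.

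Second, define $\cQ_i := \cR_i \circ \cT_i^{-1}$ for each $i \in [1:M]$. Since the composition of two Euclidean transforms is again a Euclidean transform, $\cQ_i$ is a rigid transform. By the previous step, $\x(k) = x_k = \cQ_i(y_k) = \cQ_i \, \y(k)$ whenever $(k,i) \in \cE$, which is precisely the condition in Definition \ref{def:ppequiv}. Hence $(\Gamma_C,\x) \sim (\Gamma_C,\y)$.

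There is no substantive obstacle here: the statement is essentially a bookkeeping identity extracted from the fact that both solutions must reproduce the same local coordinates. The only points worth emphasizing for clarity are (i) the invertibility of Euclidean transforms, which allows us to isolate $x_{k,i}$, and (ii) the closure of the set of Euclidean transforms under composition, which is what makes the candidate map $\cQ_i$ a legitimate rigid transform in the sense of Definition \ref{def:ppequiv}. This proposition is the bridge that will later let us translate the registration uniqueness question into the language of body-graph rigidity in the proof of Theorem \ref{thm:mainthm}.
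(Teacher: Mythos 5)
Your proof is correct and follows essentially the same route as the paper: both eliminate the shared local coordinates $x_{k,i}$ from the two solution equations and exhibit the patchwise rigid transform $\cQ_i = \cR_i \circ \cT_i^{-1}$ required by Definition \ref{def:ppequiv}. The extra remarks on invertibility and closure under composition are harmless elaborations of the same argument.
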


\begin{proof}
	Since $(\bX,\bm{\cR})$ and $(\bY,\bm{\cT})$ are solutions to \ref{reg}, we have that $\x(k) = \cR_i ( x_{k,i} )$ and $\y(k) = \cT_i ( x_{k,i} )$, $k \in P_i$, $i \in [1:M]$. Thus $\x(k) = \cQ_i  \y(k)$, where $\cQ_i = \cR_i \circ \cT_i^{-1}$.
\end{proof}	

\begin{proposition} \label{prop:equivsol}
	Let $(\bX,\bm{\cR})$ be a solution to \ref{reg} with the corresponding node-patch framework $(\Gamma_C,\textup{\x})$ and let $\y$ be such that $(\Gamma_C,\y) \sim(\Gamma_C,\textup{\x})$.  Then there exists some $\bm{\cT}$ for which $(\bY,\bm{\cT})$ is a solution of \ref{reg}. 
\end{proposition}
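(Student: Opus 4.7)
The plan is to construct $\bm{\cT}$ explicitly from the data at hand and then verify that $(\bY,\bm{\cT})$ satisfies \ref{reg}. Since the hypothesis $(\Gamma_C,\y) \sim (\Gamma_C,\x)$ is patch-wise by Definition \ref{def:ppequiv}, and since $(\bX,\bm{\cR})$ being a solution of \ref{reg} is also a patch-wise statement, the candidate $\cT_i$ is naturally determined one patch at a time.

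First I would unpack the equivalence: by Definition \ref{def:ppequiv}, there exist rigid transforms $\cQ_1,\ldots,\cQ_M$ such that $\x(k) = \cQ_i \y(k)$ for every $(k,i) \in \cE$. Next I would use the assumption that $(\bX,\bm{\cR})$ solves \ref{reg}, which gives $\x(k) = \cR_i(x_{k,i})$ for every $(k,i) \in \cE$. Combining the two relations at each edge of $\cE$ yields
\begin{equation*}
\y(k) \;=\; \cQ_i^{-1}\bigl(\cR_i(x_{k,i})\bigr) \;=\; (\cQ_i^{-1} \circ \cR_i)(x_{k,i}).
\end{equation*}

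This suggests the definition $\cT_i := \cQ_i^{-1} \circ \cR_i$ for each $i \in [1:M]$. Since rigid transforms are closed under inversion and composition, each $\cT_i$ is a rigid transform, so $\bm{\cT} := (\cT_i)_{i=1}^M$ is a legitimate tuple of patch transforms. Writing $\bY := (\y(k))_{k=1}^N$ and substituting, the displayed equation above is precisely $y_k = \cT_i(x_{k,i})$ for every $(k,i) \in \cE$, so $(\bY,\bm{\cT})$ satisfies \ref{reg} and the proposition follows.

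There is no real obstacle here; the result is essentially unpacking definitions. The only thing worth being careful about is that the equivalence of node-patch frameworks supplies a \emph{different} rigid transform $\cQ_i$ for each patch (rather than a single global transform), which is exactly what we need in order to absorb each $\cQ_i^{-1}$ into the corresponding $\cR_i$ to get a patch-indexed family $\cT_i$. Were the equivalence defined globally (as for ordinary frameworks in Definition \ref{def:eqframe}), the construction would collapse to the much stronger statement proved later under additional hypotheses.
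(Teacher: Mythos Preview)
Your proof is correct and follows essentially the same approach as the paper: unpack the patch-wise rigid transforms from the equivalence hypothesis, compose them with the given $\cR_i$, and read off that $(\bY,\bm{\cT})$ satisfies \ref{reg}. The only cosmetic difference is that the paper applies Definition~\ref{def:ppequiv} in the order $(\Gamma_C,\y)\sim(\Gamma_C,\x)$ to get $\y(k)=\cQ_i\,\x(k)$ directly (so their $\cT_i=\cQ_i\circ\cR_i$), whereas you unpack it the other way and carry an inverse, arriving at $\cT_i=\cQ_i^{-1}\circ\cR_i$; these are trivially equivalent relabelings.
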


\begin{proof}
	Indeed, $(\Gamma_C,\y) \sim (\Gamma_C,\x)$ implies that there exists rigid transforms $(\cQ_i )_{i=1}^M$ such that $\y(k) = \cQ_i  \x(k)$,  $(k,i) \in \cE$. Since $(\bX,\cR)$ is a solution to \ref{reg}, we have $\x(k) = \cR_i  ( x_{k,i} )$,  $(k,i) \in \cE$. Thus, $\y(k) = ( \cQ_i \circ \cR_i )  ( x_{k,i} )$, which shows that $(\bY,\cT)$ is a solution to \ref{reg}, where $\bY = ( \y(k) )_{k=1}^N$ and $\cT = ( \cQ_i \circ \cR_i )_{i=1}^M$.	
\end{proof}

\noindent Foregoing definitions and propositions allow us to express the condition of unique registrability in a compact manner. Namely, let $(\Gamma_C,\bar{\x})$ be the ground-truth node-patch framework. Then, under assumption (A1), \ref{reg} has a unique solution if and only if for any node-patch framework $(\Gamma_C,\y)$ such that $(\Gamma_C,\y) \sim (\Gamma_C,\bar{\x})$, we have $(\Gamma_C,\y) \equiv (\Gamma_C,\bar{\x})$.
The next two propositions relate node-patch framework and body graph framework.

\begin{proposition} \label{prop:equiveq}
	Two node-patch frameworks $(\Gamma_C,\textup{\x})$ and $(\Gamma_C,\y)$ are equivalent (Def. \ref{def:ppequiv}) if and only if the body graph frameworks $(\Gamma_B,\textup{\x})$ and $(\Gamma_B,\y)$ are equivalent (Def. \ref{def:eqframe}).
\end{proposition}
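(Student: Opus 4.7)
The plan is to prove both directions by exploiting the defining property of the body graph, namely that $(k_1,k_2)$ is an edge of $\Gamma_B$ if and only if $k_1,k_2$ lie in a common patch $P_i$. This makes the edge set of $\Gamma_B$ a disjoint-but-overlapping union of the edge sets of the patch-cliques $H_1,\dots,H_M$, which is precisely the structure that lets us move between the ``per-patch rigid transform'' viewpoint of Definition~\ref{def:ppequiv} and the ``per-edge length preservation'' viewpoint of Definition~\ref{def:eqframe}.

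For the forward direction, I would assume $(\Gamma_C,\x)\sim(\Gamma_C,\y)$ and pick any edge $(k_1,k_2)\in E(\Gamma_B)$. By definition of $\Gamma_B$ there is some patch $P_i$ containing both $k_1$ and $k_2$. Node-patch equivalence supplies a rigid transform $\cQ_i$ with $\x(k)=\cQ_i\,\y(k)$ for all $k\in P_i$; since rigid transforms are isometries, $\lVert\x(k_1)-\x(k_2)\rVert=\lVert\y(k_1)-\y(k_2)\rVert$. This is the condition in Definition~\ref{def:eqframe}, so $(\Gamma_B,\x)\sim(\Gamma_B,\y)$.

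For the reverse direction, assume $(\Gamma_B,\x)\sim(\Gamma_B,\y)$ and fix a patch $P_i$. For any two nodes $k_1,k_2\in P_i$, the pair $(k_1,k_2)$ is an edge of $\Gamma_B$ (as they share the patch $P_i$), so $\lVert\x(k_1)-\x(k_2)\rVert=\lVert\y(k_1)-\y(k_2)\rVert$. Thus the finite point sets $\{\x(k):k\in P_i\}$ and $\{\y(k):k\in P_i\}$ have matching pairwise Euclidean distances. By the standard fact that two finite point configurations in $\R^d$ with equal pairwise distances differ by a rigid motion (rotation, reflection, translation), there exists a rigid transform $\cQ_i$ with $\x(k)=\cQ_i\,\y(k)$ for every $k\in P_i$. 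Doing this for each $i\in[1:M]$ yields the family $(\cQ_i)_{i=1}^M$ required by Definition~\ref{def:ppequiv}, hence $(\Gamma_C,\x)\sim(\Gamma_C,\y)$.

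I do not anticipate a serious obstacle here: the proof is essentially a bookkeeping exercise that hinges on the single combinatorial fact that patches induce cliques in $\Gamma_B$. The only place that requires a non-trivial appeal is the reverse direction, where the step ``matching intra-patch distances implies the existence of a single rigid transform on $P_i$'' uses the classical characterization of Euclidean congruence by pairwise distances; this is standard and can be invoked without further elaboration. No assumption on patch size, non-degeneracy, or genericity is needed for this proposition.
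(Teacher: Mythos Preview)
Your proposal is correct and follows essentially the same approach as the paper's own proof: both directions exploit precisely the fact that patches induce cliques in $\Gamma_B$, and the reverse direction in both relies on the standard fact that two finite point sets with matching pairwise distances are related by a rigid transform. The only difference is presentational---you make the appeal to the congruence-by-distances fact explicit, whereas the paper simply asserts it.
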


\begin{proof}
	Suppose $(\Gamma_C,\x) \sim (\Gamma_C,\y)$. Pick an arbitrary edge $(k,l) \in E$ in the body graph $\Gamma_B = (V,E)$. From construction of $\Gamma_B$, $(k,l) \in E$ if and only if there is a patch, say $P_i$, that contains both the nodes $k$ and $l$. Since $(\Gamma_C,\x) \sim (\Gamma_C,\y)$, there exists a rigid transform $\cQ_i$ such that $\x(k) = \cQ_i  \y(k)$ and $\x(l) = \cQ_i  \y(l)$. This implies that $\x(k) - \x(l) = \cQ_i  ( \y(k) - \y(l) )$, from where it follows that $\norm{\x(k) - \x(l)} = \norm{\y(k) - \y(l) )}$. Thus, $(\Gamma_B,\x) \sim (\Gamma_B,\y)$.
	
	Conversely, suppose $(\Gamma_B,\x) \sim (\Gamma_B,\y)$. Consider an arbitrary patch $P_i$. Note that any subgraph of $\Gamma_B$ induced by a patch is a clique. This, along with the assumption that $(\Gamma_B,\x) \sim (\Gamma_B,\y)$, implies that $\norm{\x(k) - \x(l)} = \norm{\y(k) - \y(l) )}$ for every $k,l \in P_i$, which, in turn, implies that there exists a rigid transform $\cQ_i$ such that $\x(v) = \cQ_i  \y(v)$, $v \in P_i$. Thus, $(\Gamma_C,\x) \sim (\Gamma_C,\y)$. 
\end{proof}

\begin{proposition} \label{prop:congeq}
	Two node-patch frameworks $(\Gamma_C,\textup{\x})$ and $(\Gamma_C,\y)$ are congruent (Def. \ref{def:ppcong}) if and only if the body graph frameworks $(\Gamma_B,\textup{\x})$ and $(\Gamma_B,\y)$ are congruent (Def. \ref{def:congframe}).
\end{proposition}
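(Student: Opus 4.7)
The plan is to use the fact that Euclidean transforms are precisely the distance-preserving maps of $\R^d$, which makes this essentially a restatement in slightly different language. Recall that $V(\Gamma_B) = \cS$, so both definitions operate on the same underlying set of points.

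For the forward direction, I would assume $(\Gamma_C,\x) \equiv (\Gamma_C,\y)$, so there is a single rigid transform $\cQ$ with $\x(k) = \cQ\y(k)$ for every $k \in \cS$. Since rigid transforms preserve pairwise distances, we immediately get $\|\x(u)-\x(v)\| = \|\cQ\y(u) - \cQ\y(v)\| = \|\y(u)-\y(v)\|$ for all $u,v \in V = \cS$, which is exactly the definition of congruence of the body graph frameworks per Def. \ref{def:congframe}. This direction is essentially one line.

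For the reverse direction, I would assume $(\Gamma_B,\x) \equiv (\Gamma_B,\y)$, i.e., $\|\x(u)-\x(v)\| = \|\y(u)-\y(v)\|$ for every pair $u,v \in \cS$. Here I invoke the classical fact from Euclidean geometry that any two finite point configurations in $\R^d$ with identical pairwise distances are related by a rigid transform, i.e., there exists a Euclidean transform $\cQ$ with $\x(k) = \cQ\y(k)$ for all $k \in \cS$. This immediately yields $(\Gamma_C,\x) \equiv (\Gamma_C,\y)$ by Def. \ref{def:ppcong}.

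There is no real obstacle here: unlike the equivalence proposition (Proposition \ref{prop:equiveq}), where the nontrivial step was that a clique forces all pairwise edge-length equalities, which in turn produce a single per-patch rigid transform, the congruence proposition reduces cleanly to the standard correspondence between distance-preserving maps and rigid motions. The only care needed is to state the classical result being invoked (that pairwise distance equality on a finite set implies a global rigid motion), but this does not require any additional structure from the graph $\Gamma_C$ or the patch system, so essentially no part of the graph-theoretic machinery developed earlier is needed beyond the identification $V(\Gamma_B) = \cS$.
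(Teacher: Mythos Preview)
Your proof is correct and matches the paper's approach: the paper simply states that ``the above result easily follows from Definitions~\ref{def:congframe} and~\ref{def:ppcong}'' without further argument, and your write-up is precisely the natural expansion of that remark via the standard equivalence between distance preservation and rigid motions.
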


The above result easily follows from Definitions \ref{def:congframe} and \ref{def:ppcong}. We can now complete the proof of Theorem \ref{thm:mainthm}. Suppose \ref{reg} has a unique solution. We will show that the body graph framework $(\Gamma_B,\bar{\x})$ is globally rigid. Consider a framework $(\Gamma_B,\y) \sim (\Gamma_B,\bar{\x})$. Then, by Proposition \ref{prop:equiveq}, $(\Gamma_C,\y) \sim (\Gamma_C,\bar{\x})$. By Proposition \ref{prop:equivsol}, this implies that $(\Gamma_C,\y)$ correponds to a solution of \ref{reg}. Now, since \ref{reg} has a unique solution, $(\Gamma_C,\y) \equiv (\Gamma_C,\bar{\x})$. Thus, by Proposition \ref{prop:congeq}, $(\Gamma_B,\y) \equiv (\Gamma_B,\bar{\x})$.

Conversely, suppose $(\Gamma_B,\bar{\x})$ is globally rigid. Let $(\bY,\cT)$ be a solution to \ref{reg}. By Proposition \ref{prop:solequiv}, $(\Gamma_C,\y) \sim (\Gamma_C,\bar{\x})$. Hence, by Proposition \ref{prop:equiveq}, $(\Gamma_B,\y) \sim (\Gamma_B,\bar{\x})$. This, by global rigidity of $(\Gamma_B,\bar{\x})$, implies that $(\Gamma_B,\y) \equiv (\Gamma_B,\bar{\x})$. Finally, by Proposition \ref{prop:congeq}, $(\Gamma_C,\y) \equiv (\Gamma_C,\bar{\x})$.

\subsection{Proof of Theorem \ref{thm:body3suff}.} \label{subsec:body3suffproof}

To prove Theorem \ref{thm:body3suff}, we need the following proposition (similar observation was made in \cite{jordan2016}).

\begin{proposition} \label{prop:addvertexthm}
	Given a graph $G = (V,E)$, consider the graph $G' = (V\cup\{v'\},E')$ obtained by adding a new vertex $v'$ to $G$ and attaching it to a clique $H \subseteq G$, i.e., $v'$ is adjacent to every vertex of $H$ and to no other vertex of $G$. If $G'$ is generically globally rigid, then $G$ is generically globally rigid.
\end{proposition}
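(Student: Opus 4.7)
The plan is to prove the contrapositive: if $G$ is not generically globally rigid, then neither is $G'$. Assuming this, fix a generic configuration $\mathbf{p}: V \to \R^d$; by Proposition \ref{prop:riggen}, $(G, \mathbf{p})$ is not globally rigid, so there exists $\mathbf{q}: V \to \R^d$ with $(G,\mathbf{p}) \sim (G,\mathbf{q})$ but $(G,\mathbf{p}) \not\equiv (G,\mathbf{q})$. I will extend $(\mathbf{p}, \mathbf{q})$ to configurations $(\mathbf{p}', \mathbf{q}')$ of $G'$ with $\mathbf{p}'$ generic, $(G',\mathbf{p}') \sim (G',\mathbf{q}')$, and $(G',\mathbf{p}') \not\equiv (G',\mathbf{q}')$; this, again by Proposition \ref{prop:riggen}, will show that $G'$ is not generically globally rigid.

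To extend $\mathbf{p}$, pick $\mathbf{p}'(v') \in \R^d$ so that the full coordinate tuple of $\mathbf{p}'$ has no algebraic dependencies over $\mathbb{Q}$; such a point exists because, with the coordinates of $\mathbf{p}$ held fixed, the set of admissible positions for $v'$ is the complement of a countable union of proper algebraic subsets of $\R^d$, and hence has full measure. The key observation is that since $H$ is a clique in $G$, every pair of its vertices is joined by an edge of $G$, so the equivalence $(G,\mathbf{p}) \sim (G,\mathbf{q})$ preserves all pairwise distances within $V(H)$; consequently $\mathbf{p}|_{V(H)}$ and $\mathbf{q}|_{V(H)}$ are congruent in $\R^d$, and one may pick a Euclidean transform $\cU$ with $\mathbf{q}(h) = \cU(\mathbf{p}(h))$ for every $h \in V(H)$. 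Now define $\mathbf{q}'$ by $\mathbf{q}'(v) = \mathbf{q}(v)$ for $v \in V$ and $\mathbf{q}'(v') = \cU(\mathbf{p}'(v'))$. Equivalence $(G',\mathbf{p}') \sim (G',\mathbf{q}')$ is then immediate: the edges in $E$ inherit length preservation from $(G,\mathbf{p}) \sim (G,\mathbf{q})$, and for each new edge $(v', h)$ one has $\norm{\mathbf{q}'(v') - \mathbf{q}'(h)} = \norm{\cU(\mathbf{p}'(v')) - \cU(\mathbf{p}(h))} = \norm{\mathbf{p}'(v') - \mathbf{p}'(h)}$ since $\cU$ is rigid. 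Non-congruence $(G',\mathbf{p}') \not\equiv (G',\mathbf{q}')$ follows because any rigid motion aligning $\mathbf{p}'$ with $\mathbf{q}'$ would, restricted to $V$, align $\mathbf{p}$ with $\mathbf{q}$, contradicting the choice of $(\mathbf{p}, \mathbf{q})$.

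I do not foresee any serious obstacle. The two delicate points---existence of a generic extension $\mathbf{p}'$ and existence of the transform $\cU$---are handled by the measure-zero structure of non-generic configurations and by the elementary fact that two configurations in $\R^d$ with matching pairwise distances are related by a Euclidean transform, respectively. In particular, no assumption on the size, affine span, or position of $H$ is required; even the extreme case in which $H$ consists of a single vertex is covered, since then any rigid motion fixing that vertex's target image serves as $\cU$.
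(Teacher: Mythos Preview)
Your proof is correct and follows essentially the same contrapositive strategy as the paper: exploit that $H$ is a clique to transport the position of $v'$ from one framework to the other via a Euclidean transform, yielding equivalent but non-congruent frameworks of $G'$. You are in fact more careful than the paper on one point---you explicitly arrange for the extended configuration $\mathbf{p}'$ to be generic before invoking Proposition~\ref{prop:riggen}, whereas the paper's proof leaves this implicit.
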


\begin{proof}
	Suppose $G$ is not generically globally rigid. Consider two frameworks $(G,\mathbf{p})$ and $(G,\mathbf{q})$ which are equivalent but not congruent. To these frameworks, add the new vertex $v'$ to get new frameworks $(G',\mathbf{p'})$ and $(G',\mathbf{q'})$ such that the distance between $v'$ and any vertex of the subgraph $H$ is equal in both $(G',\mathbf{p'})$ and $(G',\mathbf{q'})$. Note that this can be done because $H$ is a clique and so the subframeworks induced by $H$ would be congruent in the two frameworks $(G,\mathbf{p})$ and $(G,\mathbf{q})$. Clearly, the new frameworks $(G',\mathbf{p'})$ and $(G',\mathbf{q'})$ are equivalent. But they are not congruent because $(G,\mathbf{p})$ and $(G,\mathbf{q})$ were not congruent to begin with. Thus, $G'$ is not generically globally rigid. 	
\end{proof}

We now prove Theorem \ref{thm:body3suff}. The necessity of $3$-connectivity of the body graph $\Gamma_B$ for unique registrability in $\R^2$ follows from Theorem \ref{thm:mainthm} and Theorem \ref{thm:hendrick}. We now establish sufficiency.
Given that the body graph $\Gamma_B$ is $3$-connected, we will prove that $\Gamma_B$ is generically globally rigid in $\R^2$; this, by Theorem \ref{thm:mainthm}, would imply unique registrability in $\R^2$. By Assumption (A1), there are at least $3$ nodes in each patch. Consider the following cases:
\begin{enumerate}[]
		
	\item \textbf{Case 1: Each patch contains at least $4$ nodes.}		
	Pick an arbitrary edge $(k,l)$ belonging to $\Gamma_B$. The fact that there is an edge between vertices $k$ and $l$ implies that there must be a patch, say $P_i$, which contains the nodes $k$ and $l$. Since $P_i$ contains at least $4$ nodes, we can pick two nodes $\bar{k}$ and $\bar{l}$ belonging to $P_i$ which are distinct from the nodes $k$ and $l$. Now, $P_i$ induces a clique, say $H_i$, in $\Gamma_B$. This implies that the subgraph of $\Gamma_B$ induced by the vertex set $\{k,l,\bar{k},\bar{l}\}$ is $K_4$, which, in particular, means that the edge $(k,l)$ belongs to $K_4$. The edge $(k,l)$ was chosen arbitrarily, and thus, we have shown that every edge of $\Gamma_B$ belongs to $K_4$. Since $\Gamma_B$ is also $3$-connected, Proposition \ref{prop:k4redundant} leads us to conclude that $\Gamma_B$ is redundantly rigid. Thus, $\Gamma_B$ satisfies conditions in Theorem \ref{thm:jj2d}, and is hence generically globally rigid in $\R^2$.
		
	\item \textbf{Case 2: There are patches with exactly $3$ nodes.}
	Suppose there are $m$ patches $P_1,\cdots,P_m$ that contain exactly $3$ nodes. Add a new node $k_1$ exclusively to patch $P_1$ and call the resulting patch $P'_1$. The effect of this on the body graph is the addition of a degree-$3$ vertex $k_1$ adjacent to the vertices of the clique induced by the $3$ nodes in $P_1$. Call the resulting body graph $\Gamma^1_B$. Addition of a degree-$k$ vertex to a $k$-connected graph results in a $k$-connected graph. Thus, $\Gamma^1_B$ is $3$-connected. We continue inductively: after obtaining $\Gamma^i_B$, add a new node $k_{i+1}$ exclusively to patch $P_{i+1}$ to get $P'_{i+1}$ and the resulting body graph $\Gamma^{i+1}_B$. Note that we preserve $3$-connectivity at every step of the induction. We stop after we have obtained the body graph $\Gamma^m_B$. As a result of this inductive procedure, every patch now contains at least $4$ nodes. Hence, from the arguments made in \emph{Case 1} above, $\Gamma^m_B$ is generically globally rigid in $\R^2$. Now, $\Gamma^m_B$ was obtained from $\Gamma^{m-1}_B$ by addition of a vertex and attaching it to a clique. Hence, from Proposition \ref{prop:addvertexthm}, $\Gamma^{m-1}_B$ is generically globally rigid in $\R^2$. Backtracking similarly in an inductive fashion and employing Proposition \ref{prop:addvertexthm} at every step, we deduce that the original body graph $\Gamma_B$ is generically globally rigid in $\R^2$.
\end{enumerate}


\subsection{Proof of Theorem \ref{thm:quasibodykconn}.} \label{subsec:proofquasibodykconn}
	We first prove Theorem \ref{thm:quasibodykconn}.$(ii)$. We are given that every patch has at least $k$ nodes and the body graph $\Gamma_B$ is $k$-connected. Let $H_i$ and $H_j$ be the cliques of $\Gamma_B$ induced by patches $P_i$ and $P_j$, $i \neq j$. To establish quasi $k$-connectivity of $\Gamma_C$, it suffices to show that there exists $k$ disjoint $H_i$-$H_j$ paths. Indeed, it is clear from Definition \ref{quasiconndef} that the existence of $k$ disjoint $H_i$-$H_j$ paths in $\Gamma_B$ implies the existence of $k$ $\cS$-disjoint paths in $\Gamma_C$ between $P_i$ and $P_j$. Add two new vertices $a$ and $b$ to $\Gamma_B$ such that $a$ is adjacent to every vertex of  $H_i$ (and to no other vertex of $\Gamma_B$), and $b$ is adjacent to every vertex of  $H_j$ (and to no other vertex of $\Gamma_B$). Since each patch has at least $k$ nodes, $\mathrm{degree}(a) \geq k$ and $\mathrm{degree}(b) \geq k$. Addition of a degree-$k$ vertex to a $k$-connected graph results in a $k$-connected graph. Thus, the graph obtained after adding $a$ and $b$ to $\Gamma_B$ is $k$-connected. This implies that there are at least $k$ independent paths between $a$ and $b$. Now, each such path has to be of the form $a-v_1-\cdots-v_r-b$, where $v_1 \in H_i$ and $v_r \in H_j$. This is because $a$ is adjacent only to vertices from $H_i$ and $b$ is adjacent only to vertices from $H_j$. Removing $a$ and $b$ from every such independent path gives us $k$ disjoint $H_i$-$H_j$ paths. 
	
	We now prove Theorem \ref{thm:quasibodykconn}.$(i)$. Assume, without loss of generality, that no two patches are identical. To prove $k$-connectivity of the body graph $\Gamma_B = (V,E)$, we will show that given arbitrary vertices $a,b \in V$, there exists $k$ independent paths between them. We consider the following cases:
	\begin{enumerate}[]
		\item \textbf{Case 1: $a$ and $b$ do not belong to the same patch.}
		Suppose $a \in P_i$ and $b \in P_j$, where $i \neq j$. Denote the cliques of $\Gamma_B$ induced by patches $P_i$ and $P_j$ as $H_i$ and $H_j$. Since $\Gamma_C$ is quasi $k$-connected, there exists $k$ disjoint $H_i$-$H_j$ paths (Observation \ref{obs:defquasi}). Note that a vertex in $V(H_i) \cap V(H_j)$ is also considered an $H_i$-$H_j$ path. Let $P = v_1-\cdots-v_r$ be one such path, where $v_1 \in H_i$ and $v_r \in H_j$. Since $H_i$ and $H_j$ are cliques, $(a,v_1) \in E$ and $(v_r,b) \in E$. Thus for each of the $k$ disjoint $H_i$-$H_j$ paths, we can, if needed, append vertices $a$ and $b$ at the ends to make it of the form $a-\cdots-b$. For instance, if $v_1 \neq a$ and $v_r \neq b$, we modify the path to $a-v_1-\cdots-v_r-b$. Thus, we have $k$ independent paths between $a$ and $b$.
		
		\item \textbf{Case 2: $a$ and $b$ belong to the same patch.}
		Suppose $a$ and $b$ belong to patch $P_l$. Quasi $k$-connectivity of the correspondence graph implies that each patch has at least $k$ participating nodes (Observation \ref{obs:defquasi}).  In particular, this means that the clique $H_l$ of $\Gamma_B$ induced by $P_l$ has at least $k$ vertices. Thus, if $a$ and $b$ belong to $P_l$, there are at least $k-1$ independent paths within the clique $H_l$. If $P_l$ has more than $k$ nodes, we thus get $k$ independent paths between $a$ and $b$, all from within $H_l$. But suppose $P_l$ has exactly $k$ nodes. We need an additional path between $a$ and $b$ that is independent of the $k-1$ paths we have from within $H_l$. Since we have exactly $k$ nodes in $P_l$, each node has to be participating, i.e., each node belongs to at least $2$ patches. We consider the following sub-cases:
		
		\begin{enumerate}[]
			\item \textbf{Sub-case I: There is a patch $P_i$, $i \neq l$, containing both $a$ and $b$.}
			In this case we get the additional path of the form $a-v-b$, where $v \in P_i$ and $v \notin P_l$, which, clearly, is independent of the $k-1$ paths from within $H_l$. The assumption that no two patches are identical ensures the existence of the $v$ in question.
			
			\item \textbf{Sub-case II: There is no patch other than $P_l$ containing both $a$ and $b$.}
			Suppose $a \in P_i$ and $b \in P_j$, $i \neq j$. From the quasi $k$-connectivity assumption, we know there are $k$ disjoint $H_i$-$H_j$ paths. Moreover, recall that there are exactly $k$ vertices in $H_l$. Consider the following possibilities:
			
			\begin{enumerate}[(i)]
				\item Suppose every disjoint $H_i$-$H_j$ path contains a vertex from $H_l$. This is possible if and only if each path contains exactly one vertex from $H_l$. In this case, there exists a path of the form $a-v_1-\cdots-v_r$, such that $v_1,\cdots,v_r \notin H_l$, and $v_r \in H_j$. From completeness of the clique $H_j$, we can append $b$ to the end of this path to get $a-v_1-\cdots-v_r-b$. This path is independent of the $k-1$ paths we have from within $H_l$. Thus we have the required additional path.
				
				\item The only other case is when there exists a disjoint $H_i$-$H_j$ path that has no vertex from $H_l$. Let that path be $v_1-\cdots-v_r$ where $v_1 \in H_i$ and $v_r \in H_j$. From completeness of the cliques $H_i$ and $H_j$, we can append $a$ and $b$ to the ends of this path to get $a-v_1-\cdots-v_r-b$, which is independent of the $k-1$ paths we have from within $H_l$. Again, we have the required additional path.
			\end{enumerate}
		\end{enumerate}			
	\end{enumerate}

\bibliographystyle{IEEEtran}
\bibliography{bibrefIEEE}

%

\end{document}